\newtheorem{theorem}{Theorem}
\newtheorem{lemma}{Lemma}
\newtheorem{definition}{Definition}
\newcommand{\eps}{\epsilon}
\newcommand{\tree}{\textsc{TBAP}}
\newcommand{\follow}{\textsc{PFTAL}}
\newcommand{\submod}{\textsc{SubmodPFTAL}}
\newcommand{\bandit}{\textsc{BanditSubmodPFTAL}}
\DeclareMathOperator*{\argmin}{\arg\min}
\DeclareMathOperator{\Reg}{Regret}
\DeclareMathOperator*{\E}{\mathbb{E}}
\definecolor{DarkGreen}{rgb}{0.1,0.5,0.1}
\newcommand{\longversion}[1]{}
\renewcommand{\longversion}[1]{}
\begin{document}

\title{Differentially Private Online Submodular Optimization}
\author{Adrian Rivera Cardoso\thanks{School of Industrial and Systems Engineering, Georgia Institute of Technology. {\tt adrian.riv@gatech.edu}. Supported in part by a TRIAD-NSF grant (award 1740776).} \and Rachel Cummings\thanks{School of Industrial and Systems Engineering, Georgia Institute of Technology. {\tt rachelc@gatech.edu}. Supported in part by a Mozilla Research Grant.}}
 \maketitle

\begin{abstract}
%
%
In this paper we develop the first algorithms for online submodular minimization that preserve differential privacy under full information feedback and bandit feedback.  A sequence of $T$ submodular functions over a collection of $n$ elements arrive online, and at each timestep the algorithm must choose a subset of $[n]$ before seeing the function. The algorithm incurs a cost equal to the function evaluated on the chosen set, and seeks to choose a sequence of sets that achieves low expected regret.

Our first result is in the full information setting, where the algorithm can observe the entire function after making its decision at each timestep.  We give an algorithm in this setting that is $\eps$-differentially private and achieves expected regret $\tilde{O}\left(\frac{n^{3/2}\sqrt{T}}{\eps}\right)$.  This algorithm works by relaxing submodular function to a convex function using the Lovasz extension, and then simulating an algorithm for differentially private online convex optimization.

Our second result is in the bandit setting, where the algorithm can only see the cost incurred by its chosen set, and does not have access to the entire function.  This setting is significantly more challenging because the algorithm does not receive enough information to compute the Lovasz extension or its subgradients.  Instead, we construct an unbiased estimate using a single-point estimation, and then simulate private online convex optimization using this estimate. Our algorithm using bandit feedback is $\eps$-differentially private and achieves expected regret $\tilde{O}\left(\frac{n^{3/2}T^{3/4}}{\eps}\right)$.
\end{abstract}


\section{Introduction}\label{s.intro}



Online learning has received significant attention due to the growing amounts of information collected about individuals, and has been studied in the context of a wide variety of optimization problems, including portfolio optimization \cite{cover1991, kalai2002, helmbold1998}, shortest paths \cite{takimoto2003path}, combinatorial optimization \cite{hazan2012submodular}, convex optimization \cite{hazan2010optimal, ben2015oracle}, and game theoretic optimization \cite{cesa2006prediction}. 
When these machine learning tools are applied to sensitive data from individuals, privacy concerns becoming increasingly important.  In applications such as clinical trials, online ad placement, personalized pricing, and recommender systems, online learning algorithms are dealing with personal (and possibly highly sensitive) data. 

In this paper, we develop the first algorithms for differentially private online submodular optimization.  A function $f: 2^{[n]} \rightarrow \mathbb{R}$ mapping from discrete collections of elements to real values is \emph{submodular} if it exhibits the following diminishing returns property: for all sets $S,S'\subseteq [n]$ such that $S'\subseteq S$ and for all elements $i\in [n] \setminus S$,
\begin{align*}
f_t(S' \cup \{i\})-f_t(S') \geq f_t(S \cup \{i\})-f_t(S).
\end{align*}
Submodular functions have several applications in machine learning (see \citep{krause2011submodularity} for a survey) and are used extensively used economics because their diminishing returns property captures preferences for substitutable goods and satiation from multiple copies of the same good \citep{topkis2011supermodularity, bach2013learning}.


In the \emph{Online Submodular Minimization} problem, a sequence of $T$ submodular functions $f_1, \ldots, f_T : 2^{[n]} \rightarrow \mathbb{R}$ arrive in an online fashion.  At every timestep $t$, a decision maker choose a set $S_t \subseteq [n]$ before observing the function $f_t$.  The decision maker then incurs cost $f_t(S_t)$.  The decision maker's goal is to minimize her total regret, which is defined as,
\begin{align*}
\Reg(T) = \sum_{t=1}^T f_t(S_t) - \min_{S\subseteq [n]} \sum_{t=1}^T f_t(S).
\end{align*}
That is, her regret is the difference between her total cost across all rounds, and the cost of the best fixed set in hindsight after seeing all the functions.  We say that an algorithm for the Online Submodular Minimization problem is \emph{no regret} is the regret (or expected regret for randomized algorithms) is sublinear in $T$: $\Reg(T) = o(T)$.

We consider two different settings based on the type of informational feedback the decision maker receives in each round.  In the \emph{full information setting}, the decision maker observes the entire function $f_t$ after making her choice of $S_t$.  In the \emph{bandit setting}, the decision maker only observes her cost $f_t(S_t)$ and does not receive any additional information about the function $f_t$.  The bandit setting is a more challenging environment because the decision maker has severely restricted information when making decisions, but also captures the reality of many real-world online learning problems where counterfactual outcomes cannot be measured.

We formally incorporate the task of preserving privacy by using the framework of differential privacy. Differential privacy was first defined by \cite{DMNS06} for algorithms operating on large static databases, and required that if a single entry in the database changed, then the algorithm would produce approximately the same output.  In this work, we view our database as the sequence of submodular functions $f_1, \ldots, f_T$, and the algorithm's output is the sequence of sets $S_1, \ldots, S_T$.  We require that if a single function $f_t$ were changed to a different $f_t'$, then the entire sequence of chosen sets would be approximately the same.  We formalize this in Definition \ref{def.dp} below.

Let $F = \{f_1, ..., f_T\}$ and $F' = \{f_1', \ldots, f_T'\}$ be sequences of functions.  We say $F$ and $F'$ are neighboring sequences if $f_t = f_t'$ for all but at most one $t \in [T]$. 
\begin{restatable}[Differential privacy \cite{DMNS06}]{definition}{dpdef}\label{def.dp}
An algorithm $\mathcal{A} : \mathcal{F}^T \rightarrow \mathcal{R}^T$ is $(\eps, \delta)$-{\em differentially private} if for all neighboring sequences $F, F' \in \mathcal{F}$ and every subset of the output space $\mathcal{S} \subseteq \mathcal{R}^T$, 
\begin{align*}
\Pr[\mathcal{A}(F) \in \mathcal{S}] \leq e^\epsilon P[\mathcal{A}(F') \in \mathcal{S}] + \delta.
\end{align*}
If $\delta=0$, we say that $\mathcal{A}$ is $\eps$-differentially private.
\end{restatable}

The main goal of this paper is to design differentially private no-regret algorithms for the Online Submodular Minimization problem.  There are many applications of online learning problems using sensitive data that could benefit from formal privacy guarantees, such as clinical drug trials, online ad placement, and personalized pricing.  For concreteness, we provide the following motivating example for the study of private online submodular optimization.

\paragraph{Motivating Example.}
As a concrete motivating example we consider the following online ad placement problem.  Online retailers such as Amazon, Walmart, and Target design their websites such that the retailers can offer other products at check out which complement the item the customer is buying. Due to item complementarities, each user $t$ has a supermodular utility function $f_t$, defined over the possible subsets of products the retailer can offer.  For the user that arrives at time $t$, the retailer must choose a set $S_t$ of products to display that maximize $f_t(S_t)$ without knowing the user's utility function.  The retailer receives bandit feedback since they can only observe $f_t(S_t)$, and not the entire function $f_t$.  The retailer seeks to minimize regret: $\max_{S\in [n]} \sum_{t=1}^T f_t(S) - \sum_{t=1}^T f_t(S_t)$. Since supermodular maximization is mathematically equivalent to submodular minimization, the retailer has to solve an online submodular minimization problem with bandit feedback. Existing product recommendation systems have been shown to leak information about users \citep{ZWJ14}, motivating the need for formal privacy guarantees in this setting.  Therefore, the retailer should perform this optimization in a differentially private manner to ensure that no individual's information is leaked to other users.

\subsection{Our Results and Techniques}
In this paper we develop the first algorithms for online submodular minimization that preserve differential privacy under full information feedback and bandit feedback.

We start with the full information setting, where the algorithm can observe the entire function $f_t$ after making its decision at each time $t$.  We give an algorithm in this setting that is both differentially private and satisfies no regret.

\begin{theorem}[Informal]
In the full information setting of Online Submodular Minimization, there is an $\eps$-differentially private algorithm that achieves regret:
\[ \E[\Reg(T)] = \tilde{O}\left(\frac{n^{3/2}\sqrt{T}}{\eps}\right). \]
\end{theorem}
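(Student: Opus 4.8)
The plan is to reduce the problem to differentially private \emph{online convex optimization} over the cube $[0,1]^n$ via the Lovász extension, and then instantiate a private follow‑the‑leader method whose cumulative subgradients are released by a tree‑based aggregation protocol. For each submodular $f_t$ let $\hat f_t \colon [0,1]^n \to \R$ denote its Lovász extension. Three standard facts drive the reduction: (i) $\hat f_t$ is convex and $\sum_t \hat f_t$ is the Lovász extension of the submodular function $\sum_t f_t$, so $\min_{x \in [0,1]^n} \sum_t \hat f_t(x) = \min_{S \subseteq [n]} \sum_t f_t(S)$; (ii) with full‑information access to $f_t$, a subgradient $g_t \in \partial \hat f_t(x)$ at any point $x$ is computable in closed form by sorting the coordinates of $x$ and taking successive differences of $f_t$ along the induced chain of sets, and boundedness of $f_t$ gives $\|g_t\|_\infty = O(1)$; and (iii) threshold rounding is unbiased: if $x \in [0,1]^n$, $\rho \sim \mathrm{Unif}[0,1]$, and $S_\rho := \{ i : x_i \ge \rho\}$, then $\E_\rho[f_t(S_\rho)] = \hat f_t(x)$.

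The algorithm runs \follow\ over $[0,1]^n$. For each coordinate $j \in [n]$ we maintain an independent copy of \tree, run with privacy budget $\eps/n$ and fed the scalar stream $(g_{t,j})_t$, so that at the start of round $t$ we have a privatized prefix sum $\hat s_{t-1} \approx \sum_{s<t} g_s$. At round $t$ we compute $x_t = \argmin_{x \in [0,1]^n} \bigl\{ \langle \hat s_{t-1}, x \rangle + \tfrac{1}{\eta} R(x) \bigr\}$ for an appropriate regularizer $R$ and step size $\eta$, draw $\rho_t \sim \mathrm{Unif}[0,1]$, play the set $S_t = \{ i : x_{t,i} \ge \rho_t \}$, observe $f_t$, and feed $g_t \in \partial \hat f_t(x_t)$ into the $n$ copies of \tree.

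Privacy is the easy direction. The only data‑dependent quantities are the subgradients $g_t$; changing a single $f_t$ changes only $g_t$ (the iterate $x_t$ does not depend on $f_t$), by $O(1)$ in each coordinate, and $g_t$ enters only $O(\log T)$ nodes of each tree. Hence each of the $n$ copies of \tree\ is $(\eps/n)$‑differentially private, basic composition over the $n$ coordinates makes the released sequence $(\hat s_0, \dots, \hat s_{T-1})$ $\eps$‑differentially private, and since each $x_t$ is a deterministic function of $\hat s_{t-1}$ and each $S_t$ a function of $x_t$ and the data‑independent randomness $\rho_t$, the full output $(S_1, \dots, S_T)$ is $\eps$‑differentially private by post‑processing.

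The regret bound is the crux and the main obstacle. By fact (iii) and linearity of expectation, $\E[\Reg(T)] = \E\bigl[\sum_t \hat f_t(x_t)\bigr] - \min_{x \in [0,1]^n} \sum_t \hat f_t(x) \le \E\bigl[\sum_t \langle g_t, x_t - x^* \rangle\bigr]$ by convexity, where $x^*$ is the offline minimizer; so it suffices to bound the regret of \follow\ when it is fed privatized cumulative gradients. Writing $b_t$ for the tree noise present in $\hat s_t$, \follow\ is exactly FTRL run on the surrogate linear losses $\ell_t := g_t + (b_t - b_{t-1})$, so the standard FTRL guarantee bounds $\sum_t \langle \ell_t, x_t - x^* \rangle$ by $\tfrac{1}{\eta} R_{\max} + \eta \sum_t \|\ell_t\|_*^2$; converting back to the true gradients costs the extra term $\sum_t \langle b_t - b_{t-1}, x_t - x^* \rangle$, which I would control by summation by parts, using the FTRL step bound $\|x_{t+1} - x_t\| = O(\eta \|\ell_t\|_*)$ and the fact that $x_t$ is independent of the fresh noise in $b_t$. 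Plugging in the magnitude of the \tree\ noise (an $O(\log T)$‑level tree run at budget $\eps/n$ contributes noise of magnitude $\tilde O(n/\eps)$ per coordinate), together with $\|g_t\|_\infty = O(1)$ and the size of $R_{\max}$ over $[0,1]^n$, and tuning $\eta$ to balance the leading terms, gives $\E[\Reg(T)] = \tilde O\bigl(n^{3/2}\sqrt{T}/\eps\bigr)$. I expect the delicate points to be exactly (a) bounding the per‑round noise of a $\log T$‑level tree whose budget is split $n$ ways, (b) executing the summation‑by‑parts step so the cross terms in $b_t$ telescope or vanish in expectation, and (c) choosing $R$ and $\eta$ so that $n$ and $1/\eps$ enter with the claimed exponents rather than worse ones; by contrast, the Lovász relaxation, the unbiased rounding, and the post‑processing argument for privacy are essentially routine once the standard properties of the Lovász extension and of \tree\ are in hand.
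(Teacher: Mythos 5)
Your reduction is the same as the paper's at the top level: relax to the Lov\'asz extension, privately release prefix sums of its subgradients with a tree-based aggregator, run a follow-the-leader-type update on the released sums, round the iterate to a set by uniform thresholding, and get privacy by post-processing. Where you diverge is inside the private OCO core, and that is exactly where your self-identified ``delicate points'' (a)--(c) live. The paper does not linearize the losses and does no summation-by-parts accounting of the tree noise: it adds $\frac{H}{2}\|x\|^2$ to each Lov\'asz extension to make it $H$-strongly convex, runs \follow{} on the regularized functions, and invokes the black-box regret bound $O\bigl(n(L+HD_{\mathcal K})^2\log^{2.5}T/(\eps H)\bigr)$ for strongly convex losses; the only new work is showing that the regularization costs at most $HTD_{\mathcal K}^2/2$ extra regret, after which $H$ is tuned to balance the two terms. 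This sidesteps your points (b) and (c) entirely, at the price of needing the strongly-convex \follow{} theorem rather than a vanilla FTRL bound.

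Two concrete concerns with your version, both sitting in the part you have not written out. First, splitting the budget as $\eps/n$ per coordinate across $n$ scalar trees is lossier than necessary: the subgradients of the Lov\'asz extension telescope along a chain and satisfy $\|g_t\|_1\le 4M$, so the paper feeds the whole vector $g_t$ into a single \tree{} whose noise is calibrated to its $L_2$ bound, yielding released sums with $L_2$ error $\tilde O(\sqrt n\,M/\eps)$; your uniform split gives per-coordinate error $\tilde O(nM/\eps)$ and hence $L_2$ error $\tilde O(n^{3/2}M/\eps)$, an extra factor of roughly $n$ that must be absorbed when you tune $\eta$, and it is not evident that you still land on the exponent $n^{3/2}$ in the final bound. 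Second, the parenthetical ``the iterate $x_t$ does not depend on $f_t$'' is true but not by itself sufficient for privacy: changing $f_t$ changes all later iterates $x_s$ and hence all later tree inputs $g_s$. The argument goes through only via the adaptive form of the tree mechanism's guarantee --- conditioned on the released partial sums, each later $g_s$ depends on the neighboring datasets only through $f_s$ and already-released outputs --- which is what Theorem~\ref{thm.tree_private} is implicitly supplying. Neither issue is fatal to the plan, but the quantitative heart of the claimed $\tilde O(n^{3/2}\sqrt{T}/\eps)$ bound remains to be executed, and the paper's regularize-then-cite route is a cleaner way to do it.
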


This algorithm works by first relaxing each input submodular function to a convex function using the Lovasz extension (defined formally in Section \ref{s.submod}).  Our algorithm then simulates an algorithm for differentially private online convex optimization (due to \citet{ST13}) run on the sequence of Lovasz extensions.  The differential privacy guarantee is inherited from the private online convex optimization algorithm.  To prove the regret bound, we show that the relaxation and optimization on convex functions does not increase the regret guarantee by too much.  Our algorithm loses only a factor of $\sqrt{n}$ relative to the regret of \citep{ST13} for private online convex optimization. 

We next consider the bandit setting, which is significantly more challenging and requires new techniques.  The private online convex optimization algorithm of \citet{ST13} requires use of the subgradient of the Lovasz extension.  However in the bandit setting, the algorithm does not receive enough information to compute the exact Lovasz extension or its subgradients.  Instead, we construct an unbiased estimate of the subgradient using the one-point estimation method of \cite{hazan2012submodular}.  We then apply the algorithm of \cite{ST13} to the unbiased estimate of the gradient of the Lovasz extension.  This yields a differentially private no-regret algorithm for online submodular minimization in the bandit setting.

\begin{theorem}[Informal]
In the bandit setting of Online Submodular Minimization, there is an $\eps$-differentially private algorithm that achieves regret:
\[ \E[\Reg(T)] = \tilde{O}\left(\frac{n^{3/2}T^{3/4}}{\eps}\right). \]
\end{theorem}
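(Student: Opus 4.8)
The plan is to mimic the full-information reduction: relax each $f_t$ to its Lovász extension $\hat f_t\colon[0,1]^n\to\R$ and feed an appropriate gradient to the private online convex optimization algorithm of \citet{ST13}; the only change is that in the bandit model $\hat f_t$ and its subgradients are unobservable, so we substitute an unbiased one-point estimate. We will use four facts about the Lovász extension (to be recalled in Section~\ref{s.submod}): $\hat f_t$ is convex; $\min_{x\in[0,1]^n}\hat f_t(x)=\min_{S\subseteq[n]}f_t(S)$; for bounded-range $f_t$ the extension is $O(\sqrt n)$-Lipschitz in $\ell_2$; and the threshold rounding $S_\theta(x)=\{i:x_i\ge\theta\}$ with $\theta\sim\mathrm{Unif}[0,1]$ satisfies $\E_\theta[f_t(S_\theta(x))]=\hat f_t(x)$. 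For a smoothing radius $\delta>0$ let $\hat f_{t,\delta}(x)=\E_{v\sim\mathbb B^n}[\hat f_t(x+\delta v)]$; the standard one-point identity gives $\E_{u\sim\mathbb S^{n-1}}\bigl[\tfrac n\delta\,\hat f_t(x+\delta u)\,u\bigr]=\nabla\hat f_{t,\delta}(x)$, and $|\hat f_{t,\delta}(x)-\hat f_t(x)|=O(\delta\sqrt n)$.

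\noindent\textbf{The algorithm and its privacy.}
Algorithm $\bandit$ maintains an iterate $x_t$ in a cube $K_\gamma$ shrunk by $\gamma=\Theta(\delta)$ (so that $x_t+\delta u\in[0,1]^n$). At round $t$ it draws $u_t\sim\mathbb S^{n-1}$ and $\theta_t\sim\mathrm{Unif}[0,1]$, sets $y_t=x_t+\delta u_t$, plays $S_t=S_{\theta_t}(y_t)$, observes $f_t(S_t)$, forms $g_t=\tfrac n\delta f_t(S_t)u_t$, and feeds $g_t$ to \citet{ST13} run over $K_\gamma$ to produce $x_{t+1}$. Composing the two independent randomizations gives $\E[g_t\mid x_t]=\nabla\hat f_{t,\delta}(x_t)$, and $\|g_t\|_2\le n/\delta$ since $|f_t(S_t)|$ is bounded. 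For privacy: condition on the internal randomness $(u_t,\theta_t)_t$, which is independent of the input; then the whole output $(S_1,\dots,S_T)$ is a deterministic post-processing of the gradient stream that \citet{ST13} receives, and replacing one function $f_t$ by $f_{t}'$ changes exactly one gradient $g_t$ (by at most $2n/\delta$ in $\ell_2$). Since \citet{ST13} is $\eps$-differentially private against a single bounded-norm change in its gradient stream—with its noise scaled to the bound $n/\delta$—post-processing together with the independence of $(u_t,\theta_t)_t$ shows $\bandit$ is $\eps$-differentially private.

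\noindent\textbf{Regret.}
Fix $S^\star\in\argmin_S\sum_t f_t(S)$ with corresponding continuous minimizer $x^\star$ of $\sum_t\hat f_t$, and let $x^\star_\gamma$ be its projection onto $K_\gamma$ (so $\|x^\star-x^\star_\gamma\|=O(\gamma\sqrt n)$). Using $\E[f_t(S_t)\mid y_t]=\hat f_t(y_t)$, the min-equivalence $\sum_t f_t(S^\star)=\sum_t\hat f_t(x^\star)$, and $\|y_t-x_t\|=\delta$,
\begin{align*}
\E[\Reg(T)] &= \E\Bigl[\sum_t \hat f_t(y_t)\Bigr]-\sum_t\hat f_t(x^\star)\\
&\le \E\Bigl[\sum_t\hat f_t(x_t)\Bigr]-\sum_t\hat f_t(x^\star)+O(\delta\sqrt n\,T).
\end{align*}
Passing from $\hat f_t$ to $\hat f_{t,\delta}$ on both sides (error $O(\delta\sqrt n\,T)$) and from $x^\star$ to $x^\star_\gamma$ (error $O(\gamma\sqrt n\cdot\sqrt n\,T)=O(\delta nT)$), it remains to bound $\E\bigl[\sum_t\hat f_{t,\delta}(x_t)-\hat f_{t,\delta}(x^\star_\gamma)\bigr]$. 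The algorithm of \citet{ST13} is a noisy online gradient method whose guarantee holds for \emph{arbitrary} bounded gradient streams, so $\sum_t\langle g_t,x_t-x^\star_\gamma\rangle\le R_{\mathrm{ST13}}(G,D,n,T,\eps)$; taking expectations, using $\E[g_t\mid x_t]=\nabla\hat f_{t,\delta}(x_t)$ and convexity of $\hat f_{t,\delta}$ gives $\E\bigl[\sum_t\hat f_{t,\delta}(x_t)-\hat f_{t,\delta}(x^\star_\gamma)\bigr]\le\E[R_{\mathrm{ST13}}]$ with $G=O(n/\delta)$ and $D=O(\sqrt n)$. Plugging these parameters into the bound of \citet{ST13} inflates its full-information regret by a factor $\Theta(1/\delta)$, so $\E[\Reg(T)]=\tilde O\!\bigl(\tfrac1\delta\cdot\tfrac{n^{3/2}\sqrt T}{\eps}\bigr)+O(\delta nT)$; choosing $\delta$ polynomially small (roughly $\delta\asymp\sqrt n\,T^{-1/4}$) balances the two terms and yields $\E[\Reg(T)]=\tilde O\!\bigl(n^{3/2}T^{3/4}/\eps\bigr)$.

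\noindent\textbf{Main obstacle.}
The crux is the tension between accuracy and privacy created by single-point feedback. Making the estimate $g_t$ accurate wants $\delta$ small, but $\|g_t\|=\Theta(n/\delta)$, so the noise that \citet{ST13} must add to stay $\eps$-private also scales like $n/\delta$; this coupling—absent in the full-information reduction, where the Lovász subgradient has $O(1)$ coordinates—is exactly what forces $\delta$ to be bounded away from $0$ and degrades the rate from $\sqrt T$ to $T^{3/4}$. A secondary point requiring care is verifying that the \emph{composed} randomization (sphere sampling together with threshold rounding) produces an exactly unbiased estimate of $\nabla\hat f_{t,\delta}$, and that the domain shrinking needed to keep $x_t+\delta u_t\in[0,1]^n$ contributes only lower-order regret.
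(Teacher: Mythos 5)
Your strategy is sound and reaches the stated rate, but it takes a genuinely different route from the paper's at the one place where the bandit setting actually bites: the gradient estimator. You use spherical (Flaxman--Kalai--McMahan style) smoothing --- play $S_{\theta_t}(x_t+\delta u_t)$, form $g_t=\tfrac n\delta f_t(S_t)u_t$, accept a smoothing bias of order $\delta M T$, and carry the deterministic norm bound $\|g_t\|\le Mn/\delta$ through the Lipschitz and sensitivity parameters of \citet{ST13}. The paper instead uses the chain-based sampler of \citet{hazan2012submodular}: decompose $x_t=\sum_i\mu_i\mathcal X_{B_i}$ along its maximal chain, sample $S_t=B_i$ with probability $(1-\gamma)\mu_i+\gamma/(n+1)$, and build a one-coordinate estimator $\hat g_t$ that is \emph{exactly} unbiased for a true subgradient of the unsmoothed Lov\'asz extension (Lemma~\ref{lem.unbiased}), so the only bias is the exploration term $2\gamma MT$; its second moment is $O(M^2n^2/\gamma)$ (Lemma~\ref{lem.bound}), and the regret accounting pays for the stochasticity only through a martingale argument (Lemmas~\ref{lemma_h_to_hat_h} and~\ref{lemma.alphas}) costing $O(MnD_{\mathcal K}\sqrt{T/\gamma})$, rather than routing the worst-case norm $Mn/\gamma$ through every term --- this is precisely the ``special care'' the paper invokes to avoid the $T^{11/12}$ rate of blind composition. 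Your approach buys simplicity and an almost-sure (not merely in-expectation) gradient bound, which makes the privacy calibration cleaner; the paper's buys unbiasedness for the true subgradient and a tighter variance-based analysis. Re-balancing your terms with the \tree\ noise correctly scaled to $Mn/\delta$ and the regularization penalty included gives roughly $\tilde O(Mn^{5/4}T^{3/4}/\eps^{1/4})$, which is within the claimed $\tilde O(Mn^{3/2}T^{3/4}/\eps)$, so your rate stands.

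Two details need care. First, your privacy step asserting that ``replacing one function $f_t$ changes exactly one gradient'' is not literally true even after conditioning on $(u_t,\theta_t)_t$: changing $f_t$ perturbs $x_{t+1}$ and hence every subsequent $g_{t'}$, since $g_{t'}$ depends on $f_{t'}$ evaluated at a set determined by $x_{t'}$. The argument must instead go through the differential privacy of the tree-based partial sums under adaptive composition, with everything downstream treated as post-processing (Theorems~\ref{thm.tree_private} and~\ref{thm.post}), as in the paper. Second, \follow\ requires strongly convex inputs, so your linearized losses $\langle g_t,\cdot\rangle$ must be regularized; this contributes the $THD_{\mathcal K}^2/2$ term that your ``$\Theta(1/\delta)$ inflation'' heuristic silently absorbs --- it works out in the balancing, but it should be made explicit.
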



The regret guarantees of our algorithms are worse than the best non-private algorithms by only a factor of $\sqrt{n}$ and $T^{1/12}$.

\subsection{Related Work}

Our results rely heavily on tools from \cite{ST13} and \cite{hazan2012submodular}. \cite{ST13} provides a differentially private algorithm for online convex optimization that achieves a regret rate $\tilde{O}(\frac{\sqrt{nT}}{\epsilon})$ in the full information setting, which is worse than the non-private setting by only a factor of polylog$(T) \sqrt{n}$. Under bandit feedback, they give a modification of their full information algorithm that achieves cumulative regret $\tilde{O}(\frac{nT^{3/4}}{\epsilon})$. One of the key components in our algorithms are modifications of these tools for online convex optimization, which are applied once we have relaxed the submodular functions to their convex Lovasz extensions.  \cite{hazan2012submodular} provide algorithms for non-private online submodular minimization in both the full information and bandit feedback settings. They design subgradient descent-type algorithms that achieve regret of $O(\sqrt{nT})$ and $O(nT^{2/3})$ in the full information and bandit settings respectively.  Our algorithms make use of their one-point gradient estimation technique for the bandit setting. We remark that, to the best of our knowledge, there is no known way to modify subgradient descent-type algorithms, to achieve differential privacy in the online convex bandit problem without damaging the regret bounds by less than polylog$(T)$ factors. 

Although our algorithms use these tools, composition of these previous results is not straight-forward.  The bound on the variance of the one-point gradient estimator for the Lovasz extension is not the same as that of the estimator used for online convex optimization with bandit feedback, which requires special care in the analysis. If one were to blindly compose the results of  \cite{ST13} and \cite{hazan2012submodular}, it would yield regret $O(\frac{n^2 T^{11/12}}{\eps})$ in the bandit setting, instead of the regret rate $O(\frac{n^{3/2}T^{3/4}}{\eps})$ that we achieve.


Other relevant work includes \cite{jain2012differentially}, where the authors design differentially private algorithms for online convex optimization.  However, these algorithms only achieve optimal regret rates in some special cases.   In \cite{agarwal2017price}, the authors provide differentially private algorithms for the special case of online linear optimization with bandit feedback, and obtain regret $\tilde{O}(\frac{\sqrt{T}}{\epsilon})$ which is (almost) optimal. The problem of private online submodular maximization has been studied by \cite{pmlr-v70-mitrovic17a} and \cite{gupta2010differentially}. However, our work cannot be compared to theirs since the problems of minimizing and maximizing a submodular functions are very different.  Additionally, these works only consider the offline problem with full information feedback. Finally, \cite{badanidiyuru2014streaming} studies non-private online submodular maximization only under full information feedback.

\section{Preliminaries}


In this section we present background on convex functions, submodular functions, and differential privacy that will be useful for our results in later sections.

\subsection{Convexity and Lipschitz Continuity}

For a set $X$ we define its diameter $D_X=\sup_{x,y\in X}\|x-y\|_2$.  A set $X\subseteq \mathbb{R}^d$ is a {\em convex set} if for any $x,y\in X$ and any $\lambda \in [0,1]$, $\lambda x + (1-\lambda) y \in X$.  For a function $f:X\rightarrow \mathbb{R}$, a {\em subgradient} of $f$ at a point $y$, denoted $\nabla f(y)$, is a vector $g \in \mathbb{R}^d $ such that $f(x)-f(y)\geq g^{\top}(x-y)$ for all $x \in X$.  The {\em subdifferential} of $f$ at $y$, denoted $\partial f(y)$, is the set of all subgradients of $f$ at $y$.





%

\begin{definition}[Strongly convex function]
Let $X\subseteq \mathbb{R}^d$ be a convex set.  A function $f:X\rightarrow \mathbb{R}$ is {\em $H$-strongly convex} for $H \geq 0$ if, $f(x) \geq f(y) + \nabla f(y)^\top (x-y) + \frac{H}{2}||x-y||^2_2$ for all $x,y\in X$.
If $H=0$, we say that $f$ is {\em convex}.\footnote{This is equivalent to the more commonly used definition that $f$ is convex if for any $\lambda \in[0,1]$ and for any $x,y\in X$, $\lambda f(x) + (1-\lambda) f(y)  \geq f(\lambda x + (1-\lambda) y)$.}
\end{definition}

Note that every strongly convex function is also convex.  For convex $f$, the subdifferential at every point always exists and is a closed convex set.



\begin{definition}[Lipschitz function]
A function $f:X\rightarrow \mathbb{R}$ is {\em $L$-Lipschitz continuous} with respect to a norm $||\cdot||$ if $|f(x)-f(y)| \leq L||x-y||$ for every $x,y\in X$.
\end{definition}

Lemma \ref{shalev_lipschitz} gives an equivalence between Lipschiptzness of a convex function and properties of that function's subgradients.

\begin{lemma}[\cite{shalev2012online}]\label{shalev_lipschitz}
Let $f:X\rightarrow \mathbb{R}$ be a convex function. Then $f$ is $L$-Lipschitz over $X$ with respect to norm $||\cdot||$ if and only if for all $x \in X$ and for all $\nabla f(x) \in \partial f(x)$ we have that $||\nabla f(x) ||_* \leq L$, where $||\cdot||_*$ denotes the dual norm of $||\cdot||$. 
\end{lemma}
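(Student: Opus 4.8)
The plan is to prove the two directions separately, with H\"older's inequality for dual norms, $g^\top u \le \|g\|_* \|u\|$, and the subgradient inequality supplying essentially all of the content.

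For the ``if'' direction, I would assume every subgradient of $f$ has dual norm at most $L$, and fix $x,y \in X$. Since $f$ is convex, $\partial f(x)$ is nonempty (as noted above), so pick $g \in \partial f(x)$; the subgradient inequality gives $f(y) \ge f(x) + g^\top(y-x)$, hence $f(x) - f(y) \le g^\top(x-y) \le \|g\|_* \|x-y\| \le L\|x-y\|$. Interchanging the roles of $x$ and $y$ (using some $h \in \partial f(y)$) gives $f(y) - f(x) \le L\|x-y\|$, and combining the two inequalities yields $|f(x)-f(y)| \le L\|x-y\|$, which is $L$-Lipschitzness.

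For the ``only if'' direction, I would assume $f$ is $L$-Lipschitz, fix $x \in X$ and $g \in \partial f(x)$, and recall $\|g\|_* = \sup_{\|u\|\le 1} g^\top u$. Take a unit vector $u$ (or a maximizing sequence) and consider $y = x + tu$ for small $t > 0$: the subgradient inequality gives $f(y) - f(x) \ge t\, g^\top u$, while Lipschitzness gives $f(y) - f(x) \le L t$, so $g^\top u \le L$; taking the supremum over unit $u$ gives $\|g\|_* \le L$.

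The main thing to be careful about is staying inside $X$ in the ``only if'' direction: at interior points any sufficiently small step $tu$ is admissible, while boundary points require the usual limiting argument (approaching $x$ along feasible directions and using closedness of $\partial f$), or one reads the statement of \cite{shalev2012online} as implicitly taking $X$ to have nonempty interior. I expect this boundary technicality to be the only non-routine point; everything else is just the subgradient inequality combined with the norm/dual-norm pairing.
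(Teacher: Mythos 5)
The paper offers no proof of this lemma --- it is quoted verbatim from \cite{shalev2012online} (Lemma 2.6 there) --- so there is nothing internal to compare against; your argument is the standard one from that reference and is correct. Both directions are exactly as you describe: the ``if'' direction is the subgradient inequality at $x$ and at $y$ combined with $g^\top u \le \|g\|_*\|u\|$, and the ``only if'' direction tests the subgradient against perturbations $y = x + tu$. Your worry about staying inside $X$ is not a mere technicality, though: the ``only if'' direction, read literally with ``for all $\nabla f(x) \in \partial f(x)$,'' is \emph{false} at boundary points. For example, with $X = [0,1]$ and $f \equiv 0$ (which is $0$-Lipschitz), every $g \ge 0$ is a subgradient at $x = 1$, since $f(y) - f(1) = 0 \ge g(y-1)$ for all $y \in [0,1]$; so subgradients at the boundary can have arbitrarily large dual norm. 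The statement should be read with $X$ open (or interpreted as ``there exists a subgradient of dual norm at most $L$'' at boundary points), exactly as you suggest. This does not affect the paper, which only ever invokes the unconditionally valid ``if'' direction --- bounded subgradients imply Lipschitzness --- to conclude that the Lovasz extension is $4M$-Lipschitz from the bound $\|g\|_2 \le 4M$ on its subgradients.
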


Throughout the paper, we will say that a function $f$ is $L$-Lipschitz to indicate that $f$ is $L$-Lipschitz with respect to the $L_2$ norm $||\cdot||_2$, unless otherwise stated. We also note that the $L_2$ norm is self-dual: $(||\cdot||_{2})_* = ||\cdot||_2$ \citep{nesterov2013introductory}.

\subsection{Submodular Functions}\label{s.submod}
Submodular functions share many properties with both convex and concave functions. They can be thought of as convex functions when one is trying to minimize them, however they also exhibit a diminishing marginal returns property as some concave functions do (i.e., $f(x)=\log x$). 

\begin{definition}[Submodular function]
A function $f:2^{[n]} \rightarrow [-M, M]$ is {\em submodular} if for all sets $S,S' \subseteq [n]$ such that $S'\subseteq S$ and for all elements $i\in [n] \setminus S$,
\begin{align*}
f(S' \cup {i}) - f(S') \geq f(S \cup {i}) - f(S). 
\end{align*}
\end{definition}

The connection between convex and submodular functions is formalized through the {\em Lovasz extension} (Definition \ref{def.lovasz}), which extends a submodular function $f$ over $[n]$ to its corresponding convex function $\hat{f}$ over $[0,1]^n$.  The Lovasz extension works by first describing each point in $[0,1]^n$ as a convex combination of points in $\{0,1\}^n$, which can be interpreted as subsets of $[n]$.  It then defines $\hat{f}(x)$ as the convex combination of $f$ evaluated on the sets associated with $x$.  We first define the necessary notation.

\begin{definition}[Maximal chain \cite{hazan2012submodular}] A chain of subsets of $[n]$ is a collection of sets $A_0,...,A_p$ such that $A_0 \subset A_1 \subset \cdot \cdot \cdot \subset A_p$.   
A chain is {\em maximal} if $p=n$.  For a maximal chain, $A_0 = \emptyset$, $A_n = [n]$,  and there is a unique associated permutation $\pi : [n] \rightarrow [n]$ such that $A_{\pi (i)} = A_{\pi(i)-1} \cup \{i\}$ for all $i \in [n]$. For this permutation, we can write $A_{\pi(i)} = \{ j \in [n]: \pi(j) \leq \pi(i) \}$ for all $i \in [n]$.  
\end{definition}

Define $\mathcal{K} = [0,1]^n$. For any set $S \subseteq [n]$, let $\mathcal{X}_S \in \{0,1\}^n$ denote the {\em characteristic vector} of $S$, defined as $\mathcal{X}_S (i) = 1$ if $i\in S$ and $0$ otherwise.  For any $x \in \mathcal{K}$, there is a unique chain $A_0\subset \cdot \cdot \cdot \subset A_p$ such that $x$ can be expressed as a convex combination of the characteristic vectors of the $A_i$.  That is, $\exists \mu_1, \ldots, \mu_p > 0$ such that $x= \sum_{i=0}^p \mu_i \mathcal{X}_{A_i}$ and $\sum_{i=0}^p \mu_i = 1$. Note that if $p<n$ (i.e., the chain is not maximal), the chain can be extended to a maximal chain by setting $\mu_i = 0$ for all $i$'s corresponding the the subsets of $[n]$ that were not present in the original chain.  The chain and the weights can be found in $O(n\ln(n))$ time (see, e.g., Chap. 3 of \citet{bach2013learning}).

We are now ready to define the Lovasz extension $\hat{f}$ of submodular function $f$. 
 
 \begin{definition}[Lovasz extension]\label{def.lovasz} Let $f:2^{[n]} \rightarrow [-M, M]$ be submodular. The {\em Lovasz extension} $\hat{f}: \mathcal{K} \rightarrow [-M,M]$ of $f$ is defined as follows.  For each $x \in \mathcal{K}$, let $A_0\subset \cdot \cdot \cdot \subset A_p$ be the chain associated with $x$, and let $\mu_1, \ldots, \mu_p$ be the corresponding weights in the convex combination $x= \sum_{i=0}^p \mu_i \mathcal{X}_{A_i}$.  Then, 
 \begin{align*}
 \hat{f}(x) := \sum_{i=0}^p \mu_i f(A_i) \quad \forall x \in \mathcal{K}.
 \end{align*} 
Equivalently, the Lovasz extension can also be defined by sampling $\tau$ uniformly at random from the unit interval $[0,1]$ and considering level set $S_\tau = \{i: x(i) \geq \tau\}$.  Then $\hat{f}(x) = \mathbb{E}_\tau [f(S_\tau)]$ for each  $x \in \mathcal{K}$.
 \end{definition}

 We now provide some useful properties of the Lovasz extension.
 \begin{lemma}[\cite{Fuj05, hazan2012submodular}]\label{lemma_properties_of_extension}
  The Lovasz extension $\hat{f}$ of submodular function $f$ is convex.  Additionally, for any $x \in \mathcal{K}$, let $\emptyset = B_0 \subseteq B_1\subseteq \cdot \cdot \cdot B_n$ be any maximal chain associated with $x$ and let $\pi:[n]\rightarrow[n]$ be the corresponding permutation. Then a subgradient $g$ of $\hat{f}$ at $x$ is given by: $g(i) = f(B_{\pi(i)}) - f(B_{\pi(i) - 1})$ for all $i=1,\ldots,n$.
 \end{lemma}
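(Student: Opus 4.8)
The plan is to establish both assertions simultaneously by exhibiting, for every $x\in\mathcal{K}$, an explicit subgradient of $\hat f$ at $x$ — namely the vector $g$ in the statement. Convexity then comes for free: a function defined on a convex set that admits a subgradient at every point is automatically convex (multiply the two subgradient inequalities at a convex combination $\lambda x+(1-\lambda)y$, evaluated against $x$ and against $y$, by $1-\lambda$ and $\lambda$ and add).

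First I would rewrite $\hat f$ in linear form. Fix $x\in\mathcal{K}$ and a maximal chain $\emptyset=B_0\subset B_1\subset\cdots\subset B_n=[n]$ associated with $x$, with convex weights $\mu_0,\dots,\mu_n\ge 0$, $\sum_i\mu_i=1$, and $x=\sum_{i=0}^n\mu_i\mathcal{X}_{B_i}$; let $\pi$ be the corresponding permutation and $\sigma=\pi^{-1}$ the sorting permutation, so $\mu_i=x_{\sigma(i)}-x_{\sigma(i+1)}$ with the obvious boundary conventions. Define $g(i)=f(B_{\pi(i)})-f(B_{\pi(i)-1})$. Substituting the $\mu_i$ into $\hat f(x)=\sum_{i=0}^n\mu_i f(B_i)$ and rearranging by Abel summation gives the identity
\[ \hat f(x)=f(\emptyset)+\langle g,x\rangle. \]
The crux of the argument is the inequality $\sum_{i\in S}g(i)\le f(S)-f(\emptyset)$ for every $S\subseteq[n]$. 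To prove it, list $S=\{s_1,\dots,s_k\}$ in the order given by $\pi$, so $\pi(s_1)<\cdots<\pi(s_k)$; then $\{s_1,\dots,s_{j-1}\}\subseteq B_{\pi(s_j)-1}$ and $s_j\notin B_{\pi(s_j)-1}$, so submodularity yields $g(s_j)=f(B_{\pi(s_j)-1}\cup\{s_j\})-f(B_{\pi(s_j)-1})\le f(\{s_1,\dots,s_j\})-f(\{s_1,\dots,s_{j-1}\})$, and summing over $j$ telescopes the right side to $f(S)-f(\emptyset)$.

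With the polytope-membership inequality in hand, take an arbitrary $y\in\mathcal{K}$ and decompose it over its own chain, $y=\sum_j\nu_j\mathcal{X}_{C_j}$ with $\nu_j\ge 0$, $\sum_j\nu_j=1$. Then $\langle g,y\rangle=\sum_j\nu_j\sum_{i\in C_j}g(i)\le\sum_j\nu_j\bigl(f(C_j)-f(\emptyset)\bigr)=\hat f(y)-f(\emptyset)$. Combined with $\hat f(x)=f(\emptyset)+\langle g,x\rangle$ this gives $\hat f(y)-\hat f(x)\ge\langle g,y-x\rangle$, i.e. $g\in\partial\hat f(x)$, which is exactly the claimed subgradient formula; and since such a $g$ exists at every point of $\mathcal{K}$, $\hat f$ is convex by the remark above.

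The only genuine content is the telescoping submodularity inequality; the rest is bookkeeping, and the two points I would be careful about are (i) the additive constant $f(\emptyset)$, which is not assumed to vanish here and must be carried through, and (ii) coordinates of $x$ that coincide, where ``any maximal chain associated with $x$'' should be read as any completion consistent with the weakly decreasing order of the coordinates — different choices give different but equally valid subgradients. As an alternative route one could instead invoke Edmonds' greedy characterization to identify $\hat f(x)=\max_{g\in B(f)}\langle g,x\rangle$ with the support function of the (shifted) base polytope, from which convexity is immediate as a pointwise maximum of linear functions and the subgradient is the greedy maximizer; this is essentially the computation above repackaged.
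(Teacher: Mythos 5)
Your proof is correct. Note that the paper does not actually prove this lemma --- it imports it from \cite{Fuj05, hazan2012submodular} as a black box --- so there is no in-paper argument to compare against; what you have written is a complete, self-contained proof, and it is the classical one (Lov\'asz's original argument, equivalently Edmonds' greedy algorithm for the base polytope, as you observe in your closing remark). All the essential steps check out: the Abel-summation identity $\hat f(x)=f(\emptyset)+\langle g,x\rangle$ is exactly the statement that $\hat f$ is affine on the simplex spanned by $\{\mathcal{X}_{B_0},\dots,\mathcal{X}_{B_n}\}$ with the right values at the vertices; the telescoping submodularity inequality $\sum_{i\in S}g(i)\le f(S)-f(\emptyset)$ is the genuine content and is argued correctly (the containment $\{s_1,\dots,s_{j-1}\}\subseteq B_{\pi(s_j)-1}$ with $s_j\notin B_{\pi(s_j)-1}$ is precisely what licenses the diminishing-returns step); and applying it to the chain decomposition of an arbitrary $y$ gives the subgradient inequality, with convexity following since $\mathcal{K}$ is convex and a subgradient exists at every point. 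Your two cautionary points (carrying the $f(\emptyset)$ constant, and the non-uniqueness of the maximal chain when coordinates of $x$ tie) are exactly the right ones. The only blemish is a transposed pair of weights in your parenthetical recalling that pointwise subgradients imply convexity: the inequality evaluated against $x$ should be multiplied by $\lambda$ and the one against $y$ by $1-\lambda$ (not the reverse) so that the inner-product term vanishes at $z=\lambda x+(1-\lambda)y$; this is a typo in a standard fact, not a gap.
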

 
 \begin{lemma}[\cite{JB11}]
 All subgradients $g$ of the Lovasz extension $\hat{f}:\mathcal{K}\rightarrow [-M,M]$ of a submodular function are bounded by $\|g\|_2\leq \|g\|_1\leq 4M$.
 \end{lemma}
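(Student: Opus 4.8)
The plan is as follows. The inequality $\|g\|_2 \le \|g\|_1$ is just the standard comparison of norms on $\mathbb{R}^n$ (for any vector $v$ we have $\sum_i v_i^2 \le \big(\sum_i |v_i|\big)^2$), so the entire content of the statement is the bound $\|g\|_1 \le 4M$. First I would reduce to the explicit ``chain'' subgradients described in Lemma~\ref{lemma_properties_of_extension}. The Lov\'asz extension is continuous and piecewise linear, with linear pieces indexed by the orderings of the coordinates of $x$; on the piece corresponding to a maximal chain $\emptyset = B_0 \subset B_1 \subset \cdots \subset B_n = [n]$ with associated permutation $\pi$, the gradient is exactly the vector $g_\pi$ with $g_\pi(i) = f(B_{\pi(i)}) - f(B_{\pi(i)-1})$. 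By the standard fact that the subdifferential of such a function at a point is the convex hull of the gradients of the pieces meeting at that point (equivalently, $\partial\hat f(x)$ is a face of the base polytope of $f$, whose vertices are precisely the $g_\pi$), every $g \in \partial\hat f(x)$ is a convex combination of such $g_\pi$. Since $\|\cdot\|_1$ is convex, it then suffices to prove $\|g_\pi\|_1 \le 4M$ for each maximal chain.

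So fix a maximal chain $\emptyset = B_0 \subset B_1 \subset \cdots \subset B_n = [n]$, write $B_k = B_{k-1}\cup\{i_k\}$, and set $\Delta_k := f(B_k) - f(B_{k-1})$, so that $\|g_\pi\|_1 = \sum_{k=1}^n |\Delta_k|$. Partition $[n]$ into $P = \{k : \Delta_k \ge 0\}$ and $N = \{k : \Delta_k < 0\}$, and let $a = \sum_{k \in P}\Delta_k \ge 0$ and $b = -\sum_{k\in N}\Delta_k \ge 0$, so that $\|g_\pi\|_1 = a + b$. I claim $a \le 2M$ and $b \le 2M$, which finishes the proof.

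For $a$: list $P = \{k_1 < k_2 < \cdots < k_m\}$ and build the sets $C_0 = \emptyset$, $C_j = C_{j-1}\cup\{i_{k_j}\}$. Since every $k_l$ with $l < j$ satisfies $k_l \le k_j - 1$, we have $C_{j-1} \subseteq B_{k_j - 1}$, while $i_{k_j} \notin B_{k_j-1}$; so diminishing returns (applied with $S' = C_{j-1}$, $S = B_{k_j-1}$, $i = i_{k_j}$) gives $f(C_j) - f(C_{j-1}) \ge f(B_{k_j}) - f(B_{k_j - 1}) = \Delta_{k_j}$. Summing over $j$ and telescoping, $a = \sum_j \Delta_{k_j} \le f(C_m) - f(\emptyset) \le 2M$. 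For $b$: list $N = \{k'_1 > k'_2 > \cdots > k'_l\}$ in decreasing order and build $D_0 = [n]$, $D_j = D_{j-1}\setminus\{i_{k'_j}\}$. Here $B_{k'_j} \subseteq D_{j-1}$ (the elements deleted so far have chain-indices $> k'_j$, hence lie outside $B_{k'_j}$) and $i_{k'_j} \in B_{k'_j}$; the ``deletion'' form of submodularity (for $T' \subseteq T$ and $i \in T'$, $f(T') - f(T'\setminus\{i\}) \ge f(T) - f(T\setminus\{i\})$, applied with $T' = B_{k'_j}$, $T = D_{j-1}$) gives $f(D_{j-1}) - f(D_j) \le f(B_{k'_j}) - f(B_{k'_j-1}) = \Delta_{k'_j} < 0$, i.e.\ $f(D_j) - f(D_{j-1}) \ge -\Delta_{k'_j}$. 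Summing and telescoping, $b = -\sum_j \Delta_{k'_j} \le f(D_l) - f([n]) \le 2M$. Combining, $\|g\|_2 \le \|g\|_1 = a + b \le 4M$ for every subgradient $g$.

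The main obstacle is really the reduction in the first paragraph: the bound is claimed for all subgradients, not only the extreme ``greedy/chain'' ones from Lemma~\ref{lemma_properties_of_extension}, so one must invoke either the polyhedral structure of the subdifferential (it is a face of the base polytope of $f$) or the general piecewise-linear-convexity fact that $\partial \hat f(x)$ is the convex hull of the gradients of the adjacent linear pieces. Once that reduction is in place, the quantitative heart --- bounding the positive and negative parts of the marginal-gain sequence along a chain by $2M$ each --- is an elementary telescoping argument that uses only the diminishing-returns inequality and the range $[-M,M]$ of $f$; I do not expect any difficulty there.
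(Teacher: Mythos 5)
The paper does not actually prove this lemma---it is imported verbatim from the cited reference [JB11]---so there is no in-paper argument to compare against. Your proof is correct and is essentially the standard one. The reduction in your first paragraph is sound: the subdifferential of the Lov\'asz extension at any point is a face of the base polytope of $f$ (equivalently, the convex hull of the gradients of the active linear pieces), whose extreme points are exactly the chain vectors $g_\pi$ of Lemma~\ref{lemma_properties_of_extension}; since $\|\cdot\|_1$ is convex, it suffices to bound $\|g_\pi\|_1$. (A constant shift of $f$ changes neither the subgradients nor the chain vectors, so the usual normalization $f(\emptyset)=0$ needed for the base-polytope characterization is harmless, while the range $[-M,M]$ of the original $f$ is what enters the telescoping bounds.) The quantitative step is also right: splitting the marginal gains $\Delta_k = f(B_k)-f(B_{k-1})$ into positive and negative parts, your two telescoping constructions ($C_j$ built up from $\emptyset$ for the positive part, $D_j$ peeled down from $[n]$ for the negative part) each apply diminishing returns with the correct containments ($C_{j-1}\subseteq B_{k_j-1}$ and $B_{k'_j}\subseteq D_{j-1}$), yielding $a\le f(C_m)-f(\emptyset)\le 2M$ and $b\le f(D_l)-f([n])\le 2M$, hence $\|g\|_2\le\|g\|_1\le 4M$. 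You correctly identify that the only nontrivial idea beyond the telescoping is the reduction from arbitrary subgradients to the extreme chain subgradients, and you supply it.
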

 
 \subsection{Tools from Differential Privacy}
 
 Recall the definition of differential privacy from Section \ref{s.intro}.
 
 \dpdef*
 
The following theorem says that differential privacy is robust to \emph{post-processing}: computations performed on the output of a differentially private algorithm are still differentially private.

 \begin{theorem}[Post-processing \cite{DMNS06}]\label{thm.post}
Let $\mathcal{A} : \mathcal{D} \rightarrow \mathcal{R}$ be $(\eps, \delta)$-differentially private, and let $f : \mathcal{R} \rightarrow \mathcal{R}'$ be an arbitrary randomized function.  Then $f \circ \mathcal{A} : \mathcal{D} \rightarrow \mathcal{R}'$ is $(\eps,\delta)$-differentially private.
 \end{theorem}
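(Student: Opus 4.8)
The plan is to reduce the claim to the case where $f$ is deterministic, and then dispatch that case in one line by pulling events back through $f$. To set up the reduction, I would first recall that any randomized map $f:\mathcal{R}\rightarrow\mathcal{R}'$ can be represented as a deterministic map $\tilde f:\mathcal{R}\times\Omega\rightarrow\mathcal{R}'$ fed an auxiliary random seed $\omega\in\Omega$ drawn from some fixed distribution $P_\Omega$ that is independent of the internal randomness of $\mathcal{A}$; for each frozen value of $\omega$, the map $r\mapsto\tilde f(r,\omega)$ is deterministic. Thus it suffices to prove the statement for deterministic $f$ and then average over $\omega$.

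For deterministic $f$, fix neighboring inputs $D,D'\in\mathcal{D}$ and an arbitrary measurable event $S'\subseteq\mathcal{R}'$, and let $S:=f^{-1}(S')=\{\,r\in\mathcal{R}:f(r)\in S'\,\}\subseteq\mathcal{R}$. Since $f(\mathcal{A}(D))\in S'$ holds exactly when $\mathcal{A}(D)\in S$, applying the differential privacy guarantee of $\mathcal{A}$ to the event $S$ gives
\[
\Pr[f(\mathcal{A}(D))\in S']=\Pr[\mathcal{A}(D)\in S]\le e^{\eps}\Pr[\mathcal{A}(D')\in S]+\delta=e^{\eps}\Pr[f(\mathcal{A}(D'))\in S']+\delta .
\]
For general randomized $f$, I would condition on the seed $\omega$ and apply the deterministic bound to $\tilde f(\cdot,\omega)$, then integrate: using independence of $\omega$ from $\mathcal{A}$,
\[
\Pr[f(\mathcal{A}(D))\in S']=\E_{\omega}\!\big[\Pr[\tilde f(\mathcal{A}(D),\omega)\in S'\mid\omega]\big]\le \E_{\omega}\!\big[e^{\eps}\Pr[\tilde f(\mathcal{A}(D'),\omega)\in S'\mid\omega]+\delta\big]=e^{\eps}\Pr[f(\mathcal{A}(D'))\in S']+\delta ,
\]
where the last equality again uses independence to collapse the conditional probability back to $\Pr[f(\mathcal{A}(D'))\in S']$. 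Since $D,D'$ and $S'$ were arbitrary, this is precisely the $(\eps,\delta)$-differential privacy of $f\circ\mathcal{A}$.

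The only real obstacle is measure-theoretic bookkeeping rather than mathematical content: one must ensure that $f^{-1}(S')$ is a genuine event, that the representation $\tilde f$ exists with $\omega$ independent of $\mathcal{A}$, and that the interchange of the expectation over $\omega$ with the probability over $\mathcal{A}$'s output is justified (a Fubini/Tonelli argument). In the discrete setting — or more generally for standard Borel spaces, which covers all algorithms considered in this paper — these points are routine, so the substance of the proof is the one-line event pullback together with the observation that taking expectations preserves the affine inequality $a\le e^{\eps}b+\delta$.
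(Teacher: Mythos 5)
Your proof is correct and is the standard argument for post-processing (pull the event back through a deterministic $f$, apply the privacy guarantee of $\mathcal{A}$ to the preimage, and handle randomized $f$ by conditioning on an independent seed and noting that the affine inequality $a\le e^{\eps}b+\delta$ survives averaging). The paper does not prove this statement at all --- it is quoted as a known result from \cite{DMNS06} --- so there is nothing to compare against beyond noting that your argument is the one given in the standard references and is complete modulo the routine measurability points you already flag.
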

 
In the remainder of this section, we review two differentially private algorithms that are needed for our results.  Section \ref{s.tbap} contains a Tree-based Aggregation Protocol (TBAP), which computes online differentially private partial sums of a stream of bits.  Section \ref{s.pftal} contains Private Follow the Approximate Leader, which is a differentially private algorithm for online convex optimization, and uses TBAP as a subroutine.

\subsubsection{Tree-Based Aggregation Protocol (TBAP)}\label{s.tbap}


The Tree-Based Aggregation Protocol is a tool for maintaining differentially private partial sums of vectors arriving in an online sequence.  At each time $t$, \tree~outputs a noisy sum of the input vectors up to time $t$.  This algorithm was first introduced by \citet{CSS11} and \citet{DNPR10}, and adapted in its current form by \citet{ST13}.

The algorithm, presented formally in Appendix \ref{app.algo}, works by maintaining a complete binary tree, where the $d$-dimensional input vectors are stored in the leaf nodes, and internal nodes in the tree store a noisy sum of all leaves in their sub-tree.  At each time $t$, \tree~receives input $z_t$ and updates the value of the $t$-th leaf node to be $z_t$.  The algorithm also updates the value of each internal node affected by this change to be the updated sum plus noise drawn according to a high-dimensional analog of Laplace noise.  The algorithm then outputs a noisy partial sum $v_t$ of the nodes in the tree that approximately sum to $z_t$.

The sum at each internal node is $(\eps/\log_2 T)$-differentially private, and by construction each $z_t$ affects only $\log_2 T$ nodes of the tree.  By the \emph{composition property} of differential privacy \citep{DMNS06}, the entire tree is $\eps$-differentially private (Theorem \ref{thm.tree_private}).

\begin{theorem}[\cite{CSS11, DNPR10}]\label{thm.tree_private}
\tree$(\{z_i\}_{i=1}^T,\mu,\eps)$ is $\epsilon$-differentially private for any $\mu>0$ and any sequence of vectors $z_1,\ldots,z_T$ that each have $L_2$ norm at most $\mu$.
\end{theorem}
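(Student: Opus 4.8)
The plan is to combine three standard ingredients: a sensitivity bound on each tree node's aggregate, the fact that each input vector touches only $\log_2 T$ nodes, and then basic composition together with post-processing (Theorem~\ref{thm.post}). Fix two neighboring sequences $\{z_i\}_{i=1}^T$ and $\{z_i'\}_{i=1}^T$ that agree at every index except some $t^*\in[T]$, where both $z_{t^*}$ and $z_{t^*}'$ have $L_2$ norm at most $\mu$. In the complete binary tree of \tree, the leaf holding $z_{t^*}$ lies in the subtree of exactly its $\lceil \log_2 T\rceil$ ancestors, and only the (noiseless) aggregates stored at those ancestor nodes depend on $z_{t^*}$; every other node's aggregate is literally unchanged between the two runs. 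For an affected node $v$, its noiseless aggregate is a sum over a subset of the $z_i$ that includes $z_{t^*}$, so replacing $z_{t^*}$ by $z_{t^*}'$ perturbs it by $z_{t^*}'-z_{t^*}$, a vector of $L_2$ norm at most $2\mu$. Hence the $L_2$-sensitivity of each node's aggregate is at most $2\mu$.

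Next I would verify the per-node guarantee: the high-dimensional analog of Laplace noise added at each internal node, with scale parameter proportional to $\mu\log_2 T/\eps$, makes the released value at that single node $(\eps/\log_2 T)$-differentially private. Concretely, if the noise has density proportional to $\exp(-\alpha\|x\|_2)$ with $\alpha$ chosen as $\eps/(2\mu\log_2 T)$ to match the sensitivity bound above, then for any measurable event the ratio of the output densities under the two neighboring inputs is bounded using the reverse triangle inequality $\bigl|\,\|a\|_2-\|a+(z_{t^*}'-z_{t^*})\|_2\,\bigr|\le \|z_{t^*}'-z_{t^*}\|_2\le 2\mu$, giving a ratio at most $\exp(2\mu\alpha)=\exp(\eps/\log_2 T)$. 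This is the step that requires the most care, since it is where the $L_2$ geometry of the noise actually enters; everything else is bookkeeping.

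Finally, since the only difference between the two sequences is at $t^*$, and $z_{t^*}$ affects only the $\log_2 T$ nodes on its root-to-leaf path, the joint release of all noisy node values is a composition of $\log_2 T$ mechanisms, each $(\eps/\log_2 T)$-differentially private with respect to this change (and the noise at all remaining nodes is independent of it). By the basic composition theorem for differential privacy \citep{DMNS06}, the entire vector of noisy node values is $\eps$-differentially private. Each output $v_t$ of \tree{} is a fixed signed sum of $O(\log T)$ of these node values, hence a deterministic post-processing of that $\eps$-differentially private object, so by Theorem~\ref{thm.post} the full transcript $(v_1,\dots,v_T)$ — and therefore \tree$(\{z_i\}_{i=1}^T,\mu,\eps)$ — is $\eps$-differentially private. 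I expect the main obstacle to be the per-node computation in the second paragraph: confirming that the noise distribution used in the algorithm of Appendix~\ref{app.algo} is indeed calibrated to the $L_2$-sensitivity $2\mu$ with the correct $\log_2 T/\eps$ scaling, so that the claimed $(\eps/\log_2 T)$ per-node bound holds exactly.
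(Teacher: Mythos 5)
Your argument is correct and is essentially the same one the paper relies on: the paper only sketches this result in the paragraph preceding the theorem (per-node privacy of the noisy aggregates, composition over the $O(\log T)$ nodes affected by a single $z_{t^*}$, and post-processing for the released partial sums), deferring the details to \cite{CSS11, DNPR10}, and your write-up fills in exactly those details. The one caveat you flag yourself is real but standard: the noise density in Appendix~\ref{app.algo} is scaled as $\eps/\bigl(\mu(\lceil\log_2 T\rceil+1)\bigr)$, i.e.\ calibrated to per-node sensitivity $\mu$ rather than the $2\mu$ you derive from $\|z_{t^*}'-z_{t^*}\|_2\le 2\mu$, so under the ``replace one entry'' neighboring relation the mechanism as written is $2\eps$-private; this constant-factor convention is inherited from the cited works and does not affect any of the paper's bounds.
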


In addition to being private, \tree~also provides partial sums $v_t = \sum_{i=1}^t z_t$ that are accurate (with respect to the $L_2$ norm) up to additive $O(\frac{\sqrt{d} \mu \log^2 T}{\eps})$.  This is because the $L_2$ norm of the noise at each node is Gamma distributed with standard deviation $O(\frac{\sqrt{d} \mu \log T}{\eps})$, and each partial sum is computed using at most $\log T$ nodes in the tree.

\subsubsection{Private Follow The Approximate Leader (PFTAL)}\label{s.pftal}

Private Follow The Approximate Leader (PFTAL) is an algorithm due to \citet{ST13} that takes in a sequence of strongly convex functions and outputs a sequence of points that minimizes regret. It is a variant of the Follow The Regularized Leader algorithm of \cite{hazan2007logarithmic}, with the difference that instead of using exact sums of subgradients in the update step, the algorithm uses \tree~to provide private and accurate estimates of the sums of the subgradients.  This algorithm inherits the differential privacy guarantee of \tree~via post-processing (Theorem \ref{thm.post}).  \follow~enjoys low regret due to the no-regret guarantees of Follow the Regularized Leader, and from bounds on the noise added in \tree.  The full algorithm is stated in Appendix \ref{app.algo}.

 \begin{theorem}[\citep{ST13}]\label{PFTALaccuracy}
 \follow$(\{f_i\}_{i=1}^T,H,L,X,\eps)$ is $\epsilon$-differentially private, and if $f_1, \ldots, f_T$ are $H$-strongly convex and $L$-Lipschitz, then the expected regret of \follow~satisfies:
 \begin{align*}
 \E\left[ \Reg(T) \right] = O\left(\frac{n(L+H D_X)^2 \log^{2.5}T}{\epsilon H}\right).
 \end{align*}
 

\end{theorem}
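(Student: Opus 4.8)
Each $f_t$ is convex (indeed $H$-strongly convex) and $L$-Lipschitz, so by Lemma~\ref{shalev_lipschitz} and self-duality of $\|\cdot\|_2$ every subgradient $\nabla f_t(x_t)$ that \follow\ passes to \tree\ has $\|\nabla f_t(x_t)\|_2\le L$. Theorem~\ref{thm.tree_private} then gives that the sequence of noisy partial sums produced by \tree\ is $\eps$-differentially private (as a function of $f_1,\dots,f_T$). Since each iterate $x_{t+1}$ is obtained by minimizing over $X$ a strongly convex quadratic whose only data-dependent ingredients are past iterates and the current \tree\ output, an induction shows that $(x_1,\dots,x_T)$ is a deterministic function of the sequence of outputs of \tree, so \follow\ is $\eps$-differentially private by post-processing (Theorem~\ref{thm.post}); the adaptive dependence of \tree's inputs on its own past outputs is handled as in \citet{ST13}.

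\textbf{Reduction to surrogate losses.} For the regret bound, use $H$-strong convexity to replace each loss by the quadratic lower bound
\[
\tilde f_t(x):= f_t(x_t)+\nabla f_t(x_t)^\top(x-x_t)+\tfrac H2\|x-x_t\|_2^2,
\]
which satisfies $\tilde f_t\le f_t$ on $X$ with equality at $x_t$; hence
\[
\Reg(T)\le \sum_{t=1}^T\tilde f_t(x_t)-\min_{x\in X}\sum_{t=1}^T\tilde f_t(x).
\]
As $\nabla\tilde f_t(x)=\nabla f_t(x_t)+H(x-x_t)$, each $\tilde f_t$ is $H$-strongly convex and $(L+HD_X)$-Lipschitz on $X$, so it suffices to bound \follow's regret against $\{\tilde f_t\}$.

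\textbf{Splitting off the \tree\ noise.} \follow\ plays $x_{t+1}=\argmin_{x\in X}\big[\sum_{i=1}^t\tilde f_i(x)+\nu_{t+1}^\top x\big]$, where $\nu_{t+1}$ is the error between the \tree\ output after round $t$ and the exact cumulative subgradient (the quadratic parts of the $\tilde f_i$ are tracked exactly). Let $y_{t+1}:=\argmin_{x\in X}\sum_{i=1}^t\tilde f_i(x)$ be the noiseless Follow-the-Approximate-Leader iterate for the same $\tilde f_i$; since $\sum_{i\le t}\tilde f_i$ is $Ht$-strongly convex, the standard stability bound for minimizers of strongly convex functions gives $\|x_{t+1}-y_{t+1}\|_2\le 2\|\nu_{t+1}\|_2/(Ht)$. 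Adding and subtracting $\sum_t\tilde f_t(y_t)$,
\[
\sum_{t=1}^T\tilde f_t(x_t)-\min_{x\in X}\sum_{t=1}^T\tilde f_t(x)\le\Big(\sum_{t=1}^T\tilde f_t(y_t)-\min_{x\in X}\sum_{t=1}^T\tilde f_t(x)\Big)+\sum_{t=1}^T\big(\tilde f_t(x_t)-\tilde f_t(y_t)\big).
\]
The first term is the regret of noiseless FTAL on $H$-strongly convex, $(L+HD_X)$-Lipschitz losses, which is $O\big((L+HD_X)^2\log T/H\big)$ by the classical be-the-leader-plus-stability argument \cite{hazan2007logarithmic} (this bound holds pathwise, regardless of the noise that shaped the $\tilde f_i$). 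The second term is at most $\sum_{t\ge 2}2(L+HD_X)\|\nu_t\|_2/(H(t-1))$ by $(L+HD_X)$-Lipschitzness of $\tilde f_t$ together with the stability estimate.

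\textbf{Instantiating \tree, and the main obstacle.} Taking expectations, it remains to control $\E\sum_{t\ge 2}\|\nu_t\|_2/(t-1)$ via the accuracy of \tree\ (Section~\ref{s.tbap}): each partial sum is an aggregate of $O(\log T)$ tree nodes, each carrying high-dimensional Laplace noise of scale $\tilde O(L/\eps)$, which bounds $\E\|\nu_t\|_2$ up to the relevant dimension and polylog factors. Substituting this bound and collecting terms yields $O\big(\frac{n(L+HD_X)^2\log^{2.5}T}{\eps H}\big)$, which dominates the $O((L+HD_X)^2\log T/H)$ noiseless term and gives the theorem. The crux is exactly this last computation: rather than a worst-case bound on $\|\nu_t\|_2$ one should use that \tree\ aggregates \emph{independent} noise, so the per-round errors concentrate, and one must pin down the precise power of $\log T$ contributed by the depth of the tree, the number of nodes touched per partial sum, and the moments of the noise. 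This is the only place the argument departs from the classical strongly-convex FTAL analysis, and it is the technical heart of \citet{ST13}, which we invoke directly.
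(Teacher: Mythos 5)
Your proposal is correct and follows essentially the same route as the source: the paper itself states Theorem~\ref{PFTALaccuracy} as an imported result of \citet{ST13} without proof, but your decomposition --- quadratic surrogate losses via strong convexity, the $2\|\nu_t\|_2/(Ht)$ stability bound separating the noisy iterates from the noiseless FTAL iterates, the $O((L+HD_X)^2\log T/H)$ noiseless regret, and the expected-noise accounting from \tree --- is exactly the chain the paper reproduces for the bandit case in Lemmas~\ref{regularized_lovasz_to_h}, \ref{h_to_tilde_h}, and \ref{lem.htilderegret}. The only piece you leave to \citet{ST13} is the final moment bound $\E\|\nu_t\|_2 = O(n(L+HD_X)\log^{1.5}T/\eps)$, which combined with $\sum_t 1/t \le \ln T$ yields the stated $\log^{2.5}T$; since the theorem is itself a citation, that deferral is appropriate.
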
 
 

\section{Full Information Setting}\label{s.fullinfo}
 
 In this section we present Submodular Private Follow The Approximate Leader (\submod) which is an algorithm for Online Submodular Minimization that is both differentially private and achieves near optimal regret.  In the full information setting, the result follows easily from \follow~applied to a modified version of the Lovasz extensions $\hat{f}_1, \ldots, \hat{f}_T$ of the input submodular functions.

The main difference between using a Follow The Approximate Leader type algorithm versus the subgradient descent type algorithm of \cite{hazan2012submodular} is the following. When using \submod~to make the decision at time $t+1$, we use all the subgradients we have observed at times $1,\ldots,t$. To contrast, if we used the algorithm of \cite{hazan2012submodular}, we would only be using the subgradient obtained at $t$. This difference is crucial when trying to incorporate privacy into the setting. 

Ideally, we would like to run \follow~on the Lovasz extensions themselves, so that we can apply the regret guarantee of Theorem \ref{PFTALaccuracy}.  However, \follow~requires strongly convex input functions, but the Lovasz extension is only guaranteed to be convex. To overcome this barrier, we regularize the Lovasz extensions to ensure strong convexity.  Define the $H$-regularized Lovasz extension as, 
\begin{equation} \label{eq_reglov}
\hat{f}^H(x) = \hat{f}(x) + \frac{H}{2} \|x\|^2. 
\end{equation}
The algorithm \submod~then runs \follow~on $\hat{f}^H_1, \ldots, \hat{f}^H_T$.


\begin{algorithm}
	\caption{Submodular Private Follow The Approximate Leader: \submod($\{f_i\}_{i=1}^T, M, H, L, [n], \eps$)}
	\begin{algorithmic}
		\State \textbf{Input:} Online sequence of submodular cost functions $\{f_1,...,f_T\}$, lower and upper bounds function values $[-M,M]$, strong convexity parameter $H$, Lipschitz parameter $L$, ground set $[n]$, privacy parameter $\epsilon$.
		\State \textbf{Output:} Sequence of sets $S_1, \ldots, S_T \subseteq [n]$ 
		\State Initialize $S_1 \leftarrow$ any subset of $[n]$
		\State Set $x_1 \leftarrow \mathcal{X}_{S_1}$   
		\State Output $S_1$
		\State Compute and pass $\nabla \hat{f}_1(x_1) + Hx_1$ to  \tree$(\{\nabla \hat{f}_{i}(x_{i}) + Hx_{i}\}, L, \eps)$, and receive current partial sum $v_1$
		\For {t=1, \ldots, T-1}
		\State $x_{t+1} \leftarrow \argmin_{x\in \mathcal{K}} \; v_t^\top x + \frac{H}{2} \sum_{j=1}^t \|x-x_j\|_2^2$
		\State Sample $\tau_{t+1} \sim U[0,1]$
		\State Output $S_{t+1} = \{i: x_{t+1}(i)>\tau_t\}$ and observe $f_{t+1}$
		\State Compute $\nabla \hat{f}_t(x_{t+1})$ and pass $\nabla \hat{f}_{t+1}(x_{t+1}) + Hx_{t+1}$ to \tree$(\{\nabla \hat{f}_{i}(x_{i}) + Hx_{i}\}, L, \eps)$,
		and receive current partial sum $v_{t+1}$
		\EndFor		
	\end{algorithmic}
\end{algorithm}

This algorithm is differentially private (Theorem \ref{thm.fullpriv}) and achieves $\tilde{O}(\sqrt{T})$ regret (Theorem \ref{thm.fullregret}). 


\begin{restatable}[Privacy guarantee]{theorem}{fullpriv}\label{thm.fullpriv}
\submod$(\{f_i\}_{i=1}^T, M, H, L, [n], \eps)$ is $\epsilon$-differentially private for any sequence of functions $f_1, \ldots,f_T$ with bounded range $[-M,M]$ and for any $M,H,L,n,T > 0$.
\end{restatable}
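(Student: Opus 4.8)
The plan is to show that \submod\ touches the data only by feeding a stream of regularized Lovasz subgradients into \tree\ and then post-processing \tree's outputs, so that privacy follows from the privacy of \tree\ (Theorem~\ref{thm.tree_private}) together with closure under post-processing (Theorem~\ref{thm.post}). The one point that needs care is that the stream fed to \tree\ is generated \emph{adaptively} from \tree's own earlier outputs, so the ``fixed input sequence'' form of Theorem~\ref{thm.tree_private} cannot be invoked as a black box; handling this is the main obstacle.

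First I would make the post-processing structure explicit. Write $z_t := \nabla\hat f_t(x_t) + Hx_t$ for the $t$-th vector passed to \tree, and let $v_1,\dots,v_T$ be the partial sums \tree\ returns. The point $x_1=\mathcal{X}_{S_1}$ is fixed, and for every $t\ge 1$ the update $x_{t+1}=\argmin_{x\in\mathcal{K}} v_t^\top x + \tfrac{H}{2}\sum_{j=1}^t\|x-x_j\|_2^2$ exhibits $x_{t+1}$ as a fixed function $\psi_{t+1}(v_1,\dots,v_t)$ of the \tree-outputs alone, with no dependence on $f_1,\dots,f_T$; and $S_{t+1}$ is obtained from $x_{t+1}$ by independent randomized rounding at a fresh uniform threshold. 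Hence the whole output $(S_1,\dots,S_T)$ is a data-independent randomized post-processing of the output stream of \tree, so by Theorem~\ref{thm.post} it suffices to prove that this stream is $\eps$-differentially private on neighboring function sequences. I would also record the norm bound \tree\ requires: by \cite{JB11} every subgradient of $\hat f_t$ has $\ell_2$-norm at most $4M$, and $x_t\in\mathcal{K}=[0,1]^n$ gives $\|Hx_t\|_2\le H\sqrt{n}$, so $\|z_t\|_2\le 4M+H\sqrt{n}\le L$ (taking $L$, as in Theorem~\ref{PFTALaccuracy}, to be a Lipschitz bound for the regularized extensions $\hat f^H_t$), and Theorem~\ref{thm.tree_private} applies with parameter $L$.

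To handle the adaptivity I would describe the interaction by the \emph{data-independent} maps $\phi_t(v_1,\dots,v_{t-1}):=\nabla\hat f_t\big(\psi_t(v_1,\dots,v_{t-1})\big)+H\psi_t(v_1,\dots,v_{t-1})$, which convert \tree's output history into its next input. For two sequences $F,F'$ that differ only at index $t^*$: since the $\psi_t$ are data-independent and $\hat f_t=\hat f_t'$ for all $t\ne t^*$, we get $\phi_t=\phi_t'$ for every $t\ne t^*$; and at $t^*$ the regularizer terms cancel, leaving $\sup_v\|\phi_{t^*}(v)-\phi_{t^*}'(v)\|_2\le\sup_{x\in\mathcal{K}}\big(\|\nabla\hat f_{t^*}(x)\|_2+\|\nabla\hat f_{t^*}'(x)\|_2\big)\le 8M\le 2L$ --- no more than the largest change a single leaf of norm at most $L$ can undergo. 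Thus the two runs correspond to two adaptive input strategies for \tree\ that agree except at one round, and there differ only within \tree's calibrated single-leaf sensitivity; replacing $z_{t^*}$ by $z_{t^*}'$ can of course cascade into $z_{t^*+1},\dots,z_T$, but it does so only through the fixed maps $\phi_{t^*+1},\dots,\phi_T$, i.e.\ by post-processing of \tree's outputs. Because the Laplace-type noise \tree\ puts on a node is drawn only after the leaves in that node's subtree have been written, the per-node $(\eps/\log_2 T)$-privacy guarantee survives this adaptive single-round difference, and composing it over the $\le\log_2 T$ nodes on the path from leaf $t^*$ to the root --- exactly the argument behind Theorem~\ref{thm.tree_private} --- shows the \tree-output stream is $\eps$-differentially private. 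Combined with the post-processing reduction of the previous paragraph, this establishes the claim for all $M,H,L,n,T>0$. I expect essentially all the work to be in making this last paragraph rigorous, i.e.\ in phrasing the ``\tree\ is private against adaptively chosen inputs'' statement precisely enough that the $\phi_t$-reparametrization delivers a genuine one-coordinate change.
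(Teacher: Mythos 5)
Your proposal is correct and follows the same route as the paper's proof: the \tree\ output stream $\{v_t\}$ is $\eps$-differentially private, and the iterates $x_t$ and sets $S_t$ are data-independent (randomized) post-processing of that stream, so Theorem~\ref{thm.post} finishes the argument. The only substantive addition is your explicit handling of the fact that \tree's inputs are generated adaptively from its own earlier outputs --- a real subtlety that the paper's proof passes over when it cites Theorem~\ref{thm.tree_private} as if the input stream were fixed --- and your reduction via the data-independent maps $\phi_t$ (together with the check that $\|z_t\|_2\le 4M+H\sqrt{n}\le L$) is the right way to make that step rigorous.
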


\begin{proof}
By Theorem \ref{thm.tree_private} we know that the output of \tree, $\{v_t\}_{t=1}^T$,  is $\epsilon$-differentially private. By Theorem \ref{thm.post} we get that the sequence $\{x_t\}_{t=1}^T$ is $\epsilon$-differentially private since the procedure $x_{t+1}\leftarrow \arg \min_{x\in K} v_t^\top x + \frac{H}{2} \sum_{j=1}^t||x-x_j||_2^2$ is simply post-processing of the $v_t$'s. Computing the output $\{S_t\}_{t=1}^T$ is further post-processing of the sequence $\{x_t\}_{t=1}^T$, and Theorem \ref{thm.post} again yields the result.  
\end{proof}

\begin{restatable}[Regret guarantee]{theorem}{fullregret}\label{thm.fullregret}
\submod$(\{f_i\}_{i=1}^T, M, H, L, [n], \eps)$ run with $H = O(\frac{M\sqrt{T}}{\sqrt{n}})$ and $L = 4M + H\sqrt{n}$ for any sequence of submodular functions $f_1, \ldots,f_T: 2^{[n]} \to [-M,M]$ for any $M,n,T>0$ guarantees,
\begin{align*}
\mathbb{E}\left[\sum_{t=1}^T f_t(S_t) - \min_{S\in [n]} \sum_{t=1}^T f_t(S)\right] \leq \tilde{O}\left(\frac{n^{3/2}M\sqrt{T} }{\epsilon}\right),
\end{align*}
where the expectation is taken over the randomness of \tree~and the sampling procedure to choose $S_t$. 
\end{restatable}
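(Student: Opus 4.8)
The plan is to reduce the submodular regret of \submod\ to the regret of \follow\ on a strongly convex sequence, via three moves: pass from sets to points using the Lovasz extension, pass from the Lovasz extensions $\hat{f}_t$ to their regularizations $\hat{f}^H_t$, and then invoke Theorem~\ref{PFTALaccuracy}. First I would rewrite the objective. Since $S_t$ is the level set of $x_t$ at a uniformly random threshold $\tau_t$ independent of $x_t$, and a uniform threshold meets any fixed coordinate value with probability zero (so it is immaterial whether the level set uses $>$ or $\ge$), the level-set form of the Lovasz extension in Definition~\ref{def.lovasz} gives $\E_{\tau_t}[f_t(S_t)\mid x_t] = \hat{f}_t(x_t)$; summing over $t$ and taking expectations over all of the algorithm's randomness yields $\E[\sum_t f_t(S_t)] = \E[\sum_t \hat{f}_t(x_t)]$. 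For the benchmark I would use that the Lovasz extension is additive in $f$, so $\sum_t\hat{f}_t$ is the Lovasz extension of the submodular function $\sum_t f_t$, together with the classical identity $\min_{S\subseteq[n]}\sum_t f_t(S) = \min_{x\in\mathcal{K}}\sum_t\hat{f}_t(x)$ (one inequality from $\hat{f}_t(\mathcal{X}_S)=f_t(S)$, the other from $\hat{f}_t(x)=\E_\tau[f_t(S_\tau)]\ge\min_S f_t(S)$). Hence $\E[\Reg(T)] = \E\big[\sum_t\hat{f}_t(x_t) - \min_{x\in\mathcal{K}}\sum_t\hat{f}_t(x)\big]$.

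Second I would swap the $\hat{f}_t$ for the $\hat{f}^H_t$. Letting $u\in\mathcal{K}$ minimize $\sum_t\hat{f}_t$ (we may take $u=\mathcal{X}_{S^*}$) and using $\hat{f}_t = \hat{f}^H_t - \frac{H}{2}\|\cdot\|_2^2$,
\begin{align*}
\sum_t\hat{f}_t(x_t)-\sum_t\hat{f}_t(u)
&=\Big(\sum_t\hat{f}^H_t(x_t)-\sum_t\hat{f}^H_t(u)\Big)-\frac{H}{2}\sum_t\|x_t\|_2^2+\frac{TH}{2}\|u\|_2^2\\
&\le\Big(\sum_t\hat{f}^H_t(x_t)-\min_{x\in\mathcal{K}}\sum_t\hat{f}^H_t(x)\Big)+\frac{THn}{2},
\end{align*}
where I dropped the nonpositive middle term and bounded $\|u\|_2^2\le n$ (as $u\in[0,1]^n$). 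The first bracket is exactly the regret of \follow\ run on $\hat{f}^H_1,\dots,\hat{f}^H_T$: the vectors passed to \tree, namely $\nabla\hat{f}_t(x_t)+Hx_t$, are subgradients of $\hat{f}^H_t$ at $x_t$, and $v_t$ is \tree's noisy running sum of them, so the update $x_{t+1}=\argmin_{x\in\mathcal{K}} v_t^\top x+\frac{H}{2}\sum_{j\le t}\|x-x_j\|_2^2$ is precisely the \follow\ update.

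Third I would check the hypotheses of Theorem~\ref{PFTALaccuracy} for the sequence $\hat{f}^H_t$. Each $\hat{f}^H_t$ is $H$-strongly convex by construction (convexity of $\hat{f}_t$ from Lemma~\ref{lemma_properties_of_extension} plus the identity $\frac{H}{2}\|x\|_2^2 = \frac{H}{2}\|y\|_2^2 + Hy^\top(x-y) + \frac{H}{2}\|x-y\|_2^2$), and each is $L$-Lipschitz with $L = 4M + H\sqrt{n}$: a subgradient of $\hat{f}_t$ has $\ell_2$-norm at most $4M$ by the Lovasz-subgradient bound of \citep{JB11}, $\|Hx_t\|_2\le H\sqrt{n}$ on $\mathcal{K}$, and Lemma~\ref{shalev_lipschitz} turns the subgradient bound $4M+H\sqrt{n}$ into $L$-Lipschitzness. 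Since $D_{\mathcal{K}}=\sqrt{n}$, Theorem~\ref{PFTALaccuracy} bounds the \follow\ regret by $O\!\left(\frac{n(4M+2H\sqrt{n})^2\log^{2.5}T}{\eps H}\right)$. Adding the $\frac{THn}{2}$ term, substituting the prescribed value of $H$, and simplifying then gives $\E[\Reg(T)] = \tilde{O}\!\left(\frac{n^{3/2}M\sqrt{T}}{\eps}\right)$.

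The delicate point --- and the step I expect to require the real work --- is the regularization in the second move: converting the benchmark from $\hat{f}^H_t$ back to $\hat{f}_t$ costs the additive term $\frac{THn}{2}$, which grows with $T$, so $H$ must be chosen small enough to keep this at the targeted rate while not letting the $1/H$ factor in the \follow\ bound blow up; this two-sided constraint is exactly what pins down $H$ and hence the final exponents. The only other point needing (routine) care is the sets-to-points reduction of the first move, which is where submodularity actually enters, through the Lovasz extension.
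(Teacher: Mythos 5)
Your reduction is the same as the paper's: identify \submod\ with \follow\ run on the regularized extensions $\hat f^H_t$, pass between sets and points via the level-set form of the Lovasz extension, and pay for the regularization when moving the benchmark back to $\sum_t\hat f_t$. The first and third moves are fine, and your verification of the Lipschitz constant $4M+H\sqrt n$ and of the identity $\E_\tau[f_t(S_t)\mid x_t]=\hat f_t(x_t)$ is more careful than what the paper records.

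The gap is in the final substitution, and it sits exactly at the ``delicate point'' you flagged but did not carry out. Your accounting of the regularization cost is the correct one: shifting the benchmark costs $\frac{TH}{2}\|u\|_2^2\le\frac{THn}{2}$, the factor $T$ coming from summing $\frac{H}{2}\|u\|^2$ over $t$. (The paper's own proof writes this term as $\frac{H}{2}\|x^*\|^2\le\frac{HD_{\mathcal K}^2}{2}$, silently dropping the factor of $T$ when it sums over $t$; your version is the right one, and matches Lemma~\ref{lovasz_to_regularized_lovasz} in the bandit section.) But with the prescribed $H=\Theta(M\sqrt T/\sqrt n)$ you then get $\frac{THn}{2}=\Theta(M\sqrt n\,T^{3/2})$, which is superlinear in $T$ and swamps the claimed $\tilde O(n^{3/2}M\sqrt T/\eps)$ bound, so ``substituting the prescribed value of $H$ and simplifying'' does not go through. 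The two-sided tradeoff you describe in fact forces $H$ to \emph{decrease} with $T$: balancing the \follow\ term $\frac{nM^2\log^{2.5}T}{\eps H}$ against $\frac{THn}{2}$ gives $H=\Theta\bigl(M\sqrt{\log^{2.5}T/(\eps T)}\bigr)$ and total regret $\tilde O\bigl(nM\sqrt{T/\eps}\bigr)$, which implies the stated bound when $\eps\le 1$ but is incompatible with the theorem's stated choice of $H$. So you must either retune $H$ (and note that the theorem's parameter choice, which is calibrated to the $T$-free version of the regularization cost, is then not the one your proof uses), or accept the paper's $\frac{HD_{\mathcal K}^2}{2}$ accounting, which your own (correct) computation contradicts; you cannot keep both the corrected $\frac{THn}{2}$ term and the prescribed $H$.
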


\begin{proof}
To prove the theorem, we first draw a comparison between \submod~and \follow~so that we can call upon Theorem \ref{PFTALaccuracy}.
Notice that \submod~is \follow~run on sequence of functions $\{f_t^H\}_{t=1}^T$ as defined in Equation \eqref{eq_reglov}, with two extra steps used to convert elements from $\mathcal{K}$ to subsets of $[n]$. Using the regret guarantee from \follow~(Theorem \ref{PFTALaccuracy}) on the regularized Lovasz extension we get,
\begin{equation}\label{eq.regregret}
 \mathbb{E}_{\tree}\left[\sum_{t=1}^T \hat{f}_t^H(x_t) - \min_{x\in \mathcal{K}} \sum_{t=1}^T \hat{f}_t^H(x)\right]  \leq O\left(\frac{n(L + H D_{\mathcal{K}})^2 \ln(T)^{2.5}}{\epsilon H}\right).
\end{equation}

We now transform this regret guarantee into one for the Lovasz extension. First, notice that for any $x\in \mathcal{K}$, $\|x\|\geq0$, and therefore $\sum_{t=1}^T \hat{f}_t(x_t) \leq \sum_{t=1}^T \hat{f}_t^H(x_t)$. Second, we now show that $\min_{x\in \mathcal{K}}\sum_{t=1}^T \hat{f}_t(x) \geq \min_{x\in \mathcal{K}} \sum_{t=1}^T (\hat{f}_t(x) + \frac{H}{2} ||x||^2) - \frac{HTD_{\mathcal{K}}^2}{2}$. Indeed, let $x^* = \argmin_{x\in \mathcal{K}} \sum_{t=1}^T \hat{f}_t(x)$. Then,
\begin{align*}
\min_{x\in \mathcal{K}} \sum_{t=1}^T\left(f_t(x)+ \frac{H}{2}\|x\|^2\right)  &\leq \sum_{t=1}^T \hat{f}_t(x^*) + \frac{H}{2} \|x^*\|^2\\
&= \min_{x\in \mathcal{K}} \sum_{t=1}^T \hat{f}_t (x) + \frac{H}{2}\|x^*\|^2\\
&\leq \min_{x\in \mathcal{K}} \sum_{t=1}^T \hat{f}_t (x) + \frac{H D_{\mathcal{K}}^2}{2}.
\end{align*} 

Putting this two observations together with Equation \eqref{eq.regregret} we get,
\begin{align*}
\mathbb{E}_{\tree}\left[\sum_{t=1}^T \hat{f}_t(x_t) - \min_{x\in \mathcal{K}} \sum_{t=1}^T \hat{f}_t(x)\right] &\leq \mathbb{E}_{\tree}\left[ \sum_{t=1}^T \hat{f}_t^H(x_t) - \min_{x\in \mathcal{K}} \sum_{t=1}^T \hat{f}_t^H(x) \right] + \frac{H D_{\mathcal{K}}^2}{2}\\
&\leq O\left(\frac{n(L + H D_{\mathcal{K}})^2 \ln(T)^{2.5}}{\epsilon H}\right) + \frac{H D_{\mathcal{K}}^2}{2}.
\end{align*}
Plugging $L = 4M + H D_{\mathcal{K}}$, $D_\mathcal{K}= \sqrt{n}$ and $H = O(\frac{M\sqrt{T}}{\sqrt{n}})$ yields,
\begin{align*}
\mathbb{E}_{\tree}\left[\sum_{t=1}^T \hat{f}_t(x_t) - \min_{x\in \mathcal{K}} \sum_{t=1}^T \hat{f}_t(x)\right] \leq \tilde{O}\left(\frac{n^{3/2}M\sqrt{T} }{\epsilon}\right).
\end{align*}
We are ready to conclude the proof.
\begin{align*}
\sum_{t=1}^T \mathbb{E}_{\tau, \tree}[f_t(S_t)] - \min_{S \subset [n]} \sum_{t=1}^T f_t(S) &\leq 
 \sum_{t=1}^T \mathbb{E}_{\tree}[\hat{f}_t(x_t)] -\min_{x\in \mathcal{K}}\sum_{t=1}^T \hat{f}_t(x)\\
&\leq \tilde{O}\left(\frac{n^{3/2}M\sqrt{T} }{\epsilon}\right).
\end{align*}

\end{proof}


\section{Bandit Setting}


In this section we present Submodular Private Follow The Approximate Leader with Bandit Feedback (\bandit).  This algorithm is differentially private and achieves a no regret guarantee for Online Submodular Minimization with bandit feedback.

The bandit setting makes the problem much more challenging because we do not have access to the whole function $f_t$ nor to its subgradients.  Instead we only observe the function evaluated at a single point, $f_t(S_t)$ for our chosen set $S_t$.  This means that we can no longer compute subgradients of the Lovasz extension $\nabla \hat{f}_t$ and run \follow~on the regularized $\hat{f}^H_t$ as in the full information setting.  

The key to obtain sublinear regret is to balance exploration and exploitation. In this setting, exploitation is achieved by sampling $S_t$ exactly from the distribution $\mu$ defined (through the Lovasz extension) by iterate $x_t$ of \bandit.
However, if we sample according to the distribution over sets $\mu$, we do not learn anything about the function's subgradients so, it is unclear what to do in future steps. To fix this, we should sample from some distribution that is close to $\mu$, that allows us to explore (i.e. obtain an unbiased estimate of the Lovasz extension at $x_t$). We use the sampling procedure from \citet{hazan2012submodular} to achieve this. 

With these modifications, \bandit~now works similarly to \submod~for the full information setting.  The algorithm works by computing an unbiased estimator $\hat{g}_t$ of the gradient of the Lovasz extension $\nabla \hat{f}_t$, updating a private iterate $x_t \in \mathcal{K}$ using \tree~on the regularized estimator, and outputting a random set $S_t$ that depends on $x_t$.  We now present the full algorithm of \bandit~in Algorithm \ref{alg.bandit}.
 





\begin{algorithm}
	\caption{Submodular Private Follow The Approximate Leader with Bandit Feedback: \bandit($\{f_i\}_{i=1}^T, M, H, L, [n], \eps, \gamma$)}
	\begin{algorithmic}
		\State \textbf{Input:} Online sequence of submodular cost functions $\{f_1,...,f_T\}$,  lower and upper bounds function values $[-M,M]$, strong convexity parameter $H$, Lipschitz parameter $L$, ground set $[n]$, privacy parameter $\epsilon$, parameter $\gamma$.
		\State \textbf{Output:} Sequence of sets $S_1, \ldots, S_T \subseteq [n]$
		\State Initialize $x_i \leftarrow$ arbitrary vector in $\mathcal{K}$
		\For {t=1, \ldots, T}
		\State Find maximal chain associated with $x_t$, $\emptyset = B_0 \subset B_1 \subset B_2 \subset \cdot \cdot \cdot B_n = [n]$, let $\pi$ be the associated permutation
		\State Write $x_t$ as $x_t = \sum_{i=0}^n \mu_i \mathcal{X}_{B_i}$, where $\mu_i = 0$ for the extra sets $B_i$ that where added to complete the maximal chain for $x_t$.
		\State Sample $S_t$ according to distribution: $S_t = B_i$ with probability $\rho_i = (1-\gamma)\mu_i + \frac{\gamma}{n+1}$
		\State Output $S_t$ and observe $f_t(S_t)$
		\If {$S_t = B_0$}
		\State Set $\hat{g}_t = -\frac{1}{\rho_0}f_t(B_0) e_{\pi^{-1}(1)}$
		\ElsIf {$S_t = B_n$}
		\State Set $\hat{g}_t = \frac{1}{\rho_n}f_t(B_n)e_{\pi^{-1}(n)}$
		\Else
		\State Choose $\xi \in \{+1,-1\} $ uniformly at random
		\If {$\xi = +1$}
		\State Set $\hat{g}_t = \frac{2}{\rho_i}f_t(B_i)e_{\pi^{-1}(i)}$
		\Else
		\State Set $\hat{g}_t = -\frac{2}{\rho_{i}}f_t(B_i)e_{\pi^{-1}(i+1)}$
		\EndIf
		\EndIf
		\State Pass $\hat{g}_t + H x_t$ to \tree$(\{\hat{g}_i + H x_i \}, L, \eps)$, and receive current partial sum $\hat{v}_t$
		\State Update $x_{t+1} = \argmin_{x\in \mathcal{K}} \hat{v}_t^{\top}x + \frac{H}{2}\sum_{\tau=1}^t ||x-x_\tau||^2 $
		\EndFor 
	\end{algorithmic}\label{alg.bandit}
\end{algorithm}





The analysis of \bandit~relies on the following key properties of the estimate $\hat{g}$.\footnote{Our Lemmas \ref{lem.unbiased} and \ref{lem.bound} were asserted without proof in \cite{hazan2012submodular}.  Due to minor errors in the construction of $\hat{g_t}$ in \cite{hazan2012submodular}, these claims are easily seen to be false under their construction.  Here, we build the correct estimator and prove its correctness.} Proofs are deferred to the Appendix.

\begin{restatable}{lemma}{unbiased}\label{lem.unbiased}
The random vector $\hat{g}_t$ computed in \bandit~is an unbiased estimate of a subgradient of the Lovasz extension $\hat{f}_t$ of submodular $f_t$, evaluated at point $x_t$.  That is, $$\mathbb{E}\left[\hat{g}_t \mid x_t\right] = \nabla \hat{f}_t(x_t).$$
\end{restatable}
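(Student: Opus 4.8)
The plan is to verify the identity by directly computing the conditional expectation $\mathbb{E}[\hat g_t \mid x_t]$ over the randomness used to form $\hat g_t$, and checking term by term that it equals the subgradient of $\hat f_t$ at $x_t$ furnished by Lemma~\ref{lemma_properties_of_extension}. First I would fix notation for the maximal chain $\emptyset = B_0 \subset B_1 \subset \cdots \subset B_n = [n]$ associated with $x_t$ and its permutation $\pi$: passing from $B_{j-1}$ to $B_j$ adds the element $\pi^{-1}(j)$, so Lemma~\ref{lemma_properties_of_extension} says that $g := \sum_{j=1}^{n}\bigl(f_t(B_j) - f_t(B_{j-1})\bigr)\, e_{\pi^{-1}(j)}$ is a subgradient of $\hat f_t$ at $x_t$. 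This $g$ is the target; note the lemma applies to \emph{any} maximal chain associated with $x_t$, so it is legitimate even when some $\mu_i = 0$.

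Next I would expand the expectation over the two sources of randomness in the estimator: the draw of $S_t = B_i$ with probability $\rho_i = (1-\gamma)\mu_i + \tfrac{\gamma}{n+1}$, and, when $0 < i < n$, the independent uniform sign $\xi \in \{+1,-1\}$. Since $\gamma/(n+1) > 0$ we have $\rho_i > 0$ for every $i$, so each branch is well-defined and the $1/\rho_i$ weights exactly cancel the sampling probabilities. The endpoint branches $i=0$ and $i=n$ contribute $-f_t(B_0)\,e_{\pi^{-1}(1)}$ and $f_t(B_n)\,e_{\pi^{-1}(n)}$ respectively; each interior index $0<i<n$ contributes, after averaging the two $\xi$-outcomes (which turns the factor $2$ into $1$), the vector $f_t(B_i)\bigl(e_{\pi^{-1}(i)} - e_{\pi^{-1}(i+1)}\bigr)$.

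Finally I would collect the coefficient of each basis vector $e_{\pi^{-1}(k)}$. Reindexing the $-e_{\pi^{-1}(i+1)}$ contributions by $j = i+1$, the coefficient of $e_{\pi^{-1}(k)}$ comes out to $f_t(B_k) - f_t(B_{k-1})$ for every $k \in \{1,\dots,n\}$: for $2 \le k \le n-1$ the two interior terms $+f_t(B_k)$ and $-f_t(B_{k-1})$ combine directly, while at $k=1$ and $k=n$ the missing $-f_t(B_0)$ and $+f_t(B_n)$ are supplied precisely by the endpoint branches. Hence $\mathbb{E}[\hat g_t \mid x_t] = g$, which is the claimed statement.

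I expect the only delicate point to be this boundary bookkeeping together with the index shift $j = i+1$: the telescoping closes up correctly only because the estimator treats the two ends of the chain asymmetrically, and — as the footnote observes — pinning down exactly which basis vectors and which constants $\{-1, +1, \pm 2\}$ appear in each branch is exactly where the construction of \cite{hazan2012submodular} was flawed. Beyond getting those constants right there is no analytic obstacle; it is an elementary, if careful, computation.
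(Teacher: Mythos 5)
Your proposal is correct and follows essentially the same route as the paper's proof: a direct computation of $\mathbb{E}[\hat g_t \mid x_t]$ over the draw of $S_t$ and the sign $\xi$, with the $1/\rho_i$ weights cancelling the sampling probabilities and the coefficients of each $e_{\pi^{-1}(k)}$ collecting to $f_t(B_k)-f_t(B_{k-1})$, matched against the subgradient formula of Lemma~\ref{lemma_properties_of_extension}. The boundary bookkeeping you flag is exactly the content of the paper's computation, so no gap remains.
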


\begin{restatable}{lemma}{bound}\label{lem.bound}
The random vector $\hat{g}_t$ computed in \bandit~satisfies the following bound on its expected $L_2$-norm,
\begin{align*}
\mathbb{E}\left[\|\hat{g}_t\|^2\right] \leq \frac{16M^2n^2}{\gamma},
\end{align*}
where the expectation is taken over the algorithm's internal randomness up to time $t$.
\end{restatable}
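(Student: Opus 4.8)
The plan is to bound $\|\hat g_t\|^2$ pointwise for each realization of the sampling, and then take expectation over the choice of $S_t$. Observe that in every branch of the algorithm, $\hat g_t$ is a scalar multiple of a single standard basis vector $e_j$, so $\|\hat g_t\|_2^2$ is just the square of that scalar. Concretely, if $S_t = B_i$ with $1 \le i \le n-1$, then $\|\hat g_t\|_2^2 = \frac{4}{\rho_i^2} f_t(B_i)^2$ regardless of the value of $\xi$; if $S_t = B_0$ then $\|\hat g_t\|_2^2 = \frac{1}{\rho_0^2} f_t(B_0)^2$; and if $S_t = B_n$ then $\|\hat g_t\|_2^2 = \frac{1}{\rho_n^2} f_t(B_n)^2$. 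Since $f_t$ takes values in $[-M,M]$, we have $f_t(B_i)^2 \le M^2$ in all cases, so in every branch $\|\hat g_t\|_2^2 \le \frac{4M^2}{\rho_i^2}$ (the constant $4$ dominates the constant $1$ from the two endpoint cases).

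Next I would take the expectation over the random index $i$ (i.e., over the draw of $S_t$), conditioned on $x_t$ and hence on the chain $B_0 \subset \cdots \subset B_n$ and the weights $\rho_i$. Using the tower property,
\begin{align*}
\mathbb{E}\left[\|\hat g_t\|_2^2 \mid x_t\right] \;\le\; \sum_{i=0}^{n} \rho_i \cdot \frac{4M^2}{\rho_i^2} \;=\; 4M^2 \sum_{i=0}^{n} \frac{1}{\rho_i}.
\end{align*}
Now the exploration term in the sampling distribution gives the uniform lower bound $\rho_i = (1-\gamma)\mu_i + \frac{\gamma}{n+1} \ge \frac{\gamma}{n+1}$ for every $i$, so $\frac{1}{\rho_i} \le \frac{n+1}{\gamma}$. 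There are $n+1$ terms in the sum, hence $\sum_{i=0}^n \frac{1}{\rho_i} \le \frac{(n+1)^2}{\gamma}$, giving $\mathbb{E}[\|\hat g_t\|_2^2 \mid x_t] \le \frac{4M^2(n+1)^2}{\gamma} \le \frac{16M^2 n^2}{\gamma}$ (bounding $(n+1)^2 \le 4n^2$ for $n \ge 1$). Finally, taking a further expectation over all the algorithm's randomness up to time $t$ — which determines $x_t$ — removes the conditioning and yields the claimed bound $\mathbb{E}[\|\hat g_t\|^2] \le \frac{16M^2 n^2}{\gamma}$.

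I do not anticipate a serious obstacle here; the argument is an essentially mechanical combination of (i) the observation that $\hat g_t$ is always axis-aligned so norms are trivial to compute, (ii) the boundedness of $f_t$, and (iii) the explicit $\frac{\gamma}{n+1}$ floor on the sampling probabilities. The only point requiring a little care is making sure the endpoint cases $S_t = B_0$ and $S_t = B_n$ are handled — but since they carry a smaller constant than the generic case, they are absorbed without loss. One should also double-check that the constants are being tracked loosely enough to land exactly at $16M^2n^2/\gamma$ rather than a sharper-looking constant; the slack from $(n+1)^2 \le 4n^2$ and from the $1$-vs-$4$ comparison is exactly what makes the stated bound clean.
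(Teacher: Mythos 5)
Your proposal is correct and follows essentially the same route as the paper's own proof: compute the conditional expectation of $\|\hat g_t\|_2^2$ case by case, bound $f_t(B_i)^2 \le M^2$ to get $4M^2\sum_{i=0}^n \rho_i^{-1}$, and then use the exploration floor $\rho_i \ge \gamma/(n+1)$ (valid since $\gamma \le 1$ by the parameter choice) to conclude. The only cosmetic difference is that the paper writes out the exact expectation before bounding, whereas you bound each branch first; the constants land in the same place.
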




The exploration-exploitation dilemma can be better understood through the parameter $\gamma$.  This parameter trades off between variance of the estimate $\hat{g}_t$ and the approximation of the Lovasz extension $\hat{f}_t$ to the true submodular function $f_t$.  When $\gamma$ is large, the variance of $\hat{g}_t$ is diminished, as can be seen in the statement of Lemma \ref{lem.bound}.  When $\gamma$ is small, the performance of $f_t(S_t)$ is close to that of $\hat{f}_t(x_t)$ (see Lemma \ref{continous_to_set} in Section \ref{s.regret}). In the statement of our main result (Theorem \ref{thm:DP_bandit}), we optimally tune $\gamma$ to balance exploration and exploitation and minimize overall regret of \bandit.

 Our two main results of this section show that \bandit~is differentially private and achieves low regret.

 \begin{restatable}[Privacy guarantee]{theorem}{banditpriv}\label{thm:privacy_DP_bandit}
\bandit$(\{f_i\}_{i=1}^T, M, H, L, [n], \eps, \gamma)$ is $\epsilon$-differentially private for any sequence of functions $f_1, \ldots,f_T$ with bounded range $[-M,M]$ and for any $M,H,L,n,T,\gamma > 0$.
\end{restatable}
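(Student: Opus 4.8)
The plan is to follow the template of the full-information privacy proof (Theorem~\ref{thm.fullpriv}) almost verbatim: identify \tree~as the only subroutine of \bandit~whose output depends on the functions $f_i$, show that this output is $\eps$-differentially private, and then conclude that the released sequence $S_1,\dots,S_T$ is $\eps$-differentially private because it is obtained from \tree's partial sums $\hat v_1,\dots,\hat v_T$ purely by post-processing (Theorem~\ref{thm.post}). Concretely, $x_1$ is a fixed vector of $\mathcal{K}$; for each $t$ the update $x_{t+1}=\argmin_{x\in\mathcal{K}}\hat v_t^\top x+\frac{H}{2}\sum_{\tau=1}^t\|x-x_\tau\|^2$ is a deterministic function of $\hat v_t$ and of $x_1,\dots,x_t$, so by induction each $x_t$ is a deterministic function of $\hat v_1,\dots,\hat v_{t-1}$; and the set $S_t$ is produced from $x_t$ by the $\rho$-weighted sampling step, which is a randomized map that never inspects the $f_i$. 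Hence $(S_1,\dots,S_T)$ is a post-processing of $(\hat v_1,\dots,\hat v_T)$, and it suffices to argue that the latter is $\eps$-differentially private.

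To do that I would invoke Theorem~\ref{thm.tree_private}, which requires that every vector handed to \tree, namely $\hat g_t+Hx_t$, has bounded $L_2$-norm. This is the one place where the bandit analysis is not a literal copy of the full-information one: privacy needs a \emph{worst-case} norm bound on the \tree~inputs, whereas Lemma~\ref{lem.bound} only controls $\E[\|\hat g_t\|^2]$, which is exactly what is needed for the regret analysis but is not enough here. The worst-case bound, however, is immediate from the construction of $\hat g_t$: in every branch $\hat g_t$ equals a single signed coordinate vector scaled by at most $\tfrac{2}{\rho_i}|f_t(B_i)|$, and since $\rho_i=(1-\gamma)\mu_i+\tfrac{\gamma}{n+1}\ge\tfrac{\gamma}{n+1}$ and $|f_t(B_i)|\le M$, we get $\|\hat g_t\|_2\le \tfrac{2M(n+1)}{\gamma}$; combined with $\|Hx_t\|_2\le H\sqrt{n}$ (since $x_t\in[0,1]^n$), every input handed to \tree~has $L_2$-norm at most $\tfrac{2M(n+1)}{\gamma}+H\sqrt{n}$ deterministically. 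Since this is a fixed, data-independent bound, Theorem~\ref{thm.tree_private} applies and gives that $(\hat v_1,\dots,\hat v_T)$ is $\eps$-differentially private.

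Assembling the two steps finishes the proof: $(\hat v_1,\dots,\hat v_T)$ is $\eps$-differentially private by Theorem~\ref{thm.tree_private}, the iterates $\{x_t\}$ and then the sets $\{S_t\}$ are obtained from it by post-processing, and Theorem~\ref{thm.post} preserves the guarantee. One point deserves a sentence in the write-up: the vectors $\hat g_t+Hx_t$ fed to \tree~are themselves data dependent (through $f_t(S_t)$) and moreover depend on \tree's own earlier outputs (through $x_t$); this causes no difficulty, exactly as in the proof of Theorem~\ref{thm.fullpriv}, where $\nabla\hat f_t(x_t)+Hx_t$ likewise depends on $f_t$ and on $x_t$, and the privacy guarantee of \tree~holds for \emph{any} such sequence of bounded-norm input vectors. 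I therefore do not expect a substantive obstacle here — the only genuine content beyond transcribing the full-information proof is establishing the deterministic norm bound on $\hat g_t$, which is where the sampling floor $\rho_i\ge\tfrac{\gamma}{n+1}$ is used.
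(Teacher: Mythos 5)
Your proposal is correct and follows the same route as the paper's proof: the partial sums output by \tree~are $\eps$-differentially private by Theorem~\ref{thm.tree_private}, and the iterates $x_t$ and the sets $S_t$ are obtained from them by post-processing, so Theorem~\ref{thm.post} finishes the argument. The one thing you add that the paper's proof leaves implicit is the explicit worst-case bound $\|\hat{g}_t + Hx_t\|_2 \leq \frac{2M(n+1)}{\gamma} + H\sqrt{n}$ needed to satisfy the hypothesis of Theorem~\ref{thm.tree_private}; this is a worthwhile check, since Lemma~\ref{lem.bound} only controls the norm of $\hat{g}_t$ in expectation.
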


\begin{proof} 
By Theorem \ref{thm.tree_private} we know that the output of \tree, $\{v_t\}_{t=1}^T$,  is $\epsilon$-differentially private. Notice that \bandit~is running \follow~on regularized functions $\hat{g}_t^\top x + \frac{H}{2}\|x\|^2$ thus by the same reasoning as in Theorem \ref{thm.fullpriv}, the sequence $\{x_t\}_{t=1}^T$ is $\epsilon$-differentially private since the procedure $x_{t+1}\leftarrow \arg \min_{x\in K} v_t^\top x + \frac{H}{2} \sum_{j=1}^t \|x-x_j\|_2^2$ is simply post-processing of the $v_t$'s. Since $\{S_t\}_{t=1}^T$ is post-processing on the sequence $\{x_t\}_{t=1}^T$, applying Theorem \ref{thm.post} again completes the proof.
\end{proof}
 
 \begin{restatable}[Regret guarantee]{theorem}{banditregret}\label{thm:DP_bandit}
 \bandit$(\{f_i\}_{i=1}^T, M, H, L, [n], \eps, \gamma)$ run with $H=O(\frac{M}{\sqrt{n}T^{1/4}})$, $L=4M+2H\sqrt{n}$, and $\gamma = \frac{n}{T^{1/4}}$ for any sequence of submodular functions $f_1, \ldots,f_T: 2^{[n]} \to [-M,M]$ for any $M,n,T>0$ guarantees, 
 \begin{align*}
 \mathbb{E}\left[\sum_{t=1}^T f_t(S_t) - \min_{S \subseteq [n]} \sum_{t=1}^T f_t(S)\right] \leq \tilde{O}\left(\frac{M n^{3/2}T^{3/4}}{\epsilon}\right). 
 \end{align*}
 \end{restatable}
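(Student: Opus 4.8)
The plan is to follow the template of the proof of Theorem~\ref{thm.fullregret}, replacing the exact subgradient $\nabla\hat f_t(x_t)$ by the unbiased estimator $\hat g_t$ produced by \bandit and paying an extra exploration cost governed by $\gamma$. The structural observation is that \bandit\ is \follow\ run on the $H$-regularized Lovász extensions $\hat f_t^H$ of Equation~\eqref{eq_reglov}, with two changes: at round $t$ it feeds \tree\ the unbiased estimate $\hat g_t+Hx_t$ of $\nabla\hat f_t^H(x_t)$ rather than the true value, and it rounds $x_t\in\mathcal K$ to a set $S_t$ by the mixed rule $\rho_i=(1-\gamma)\mu_i+\tfrac{\gamma}{n+1}$ instead of exact level-set rounding. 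I would first split the regret as
\begin{align*}
\E\Big[\sum_t f_t(S_t)\Big]-\min_{S\subseteq[n]}\sum_t f_t(S)\ \le\ \underbrace{\E\Big[\sum_t\big(f_t(S_t)-\hat f_t(x_t)\big)\Big]}_{\text{(I)}}\ +\ \underbrace{\E\Big[\sum_t\hat f_t(x_t)\Big]-\min_{x\in\mathcal K}\sum_t\hat f_t(x)}_{\text{(II)}},
\end{align*}
using that $\hat f_t$ agrees with $f_t$ on $\{0,1\}^n\subseteq\mathcal K$, so $\min_{x\in\mathcal K}\sum_t\hat f_t(x)\le\min_S\sum_t f_t(S)$. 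For (I), conditioning on $x_t=\sum_i\mu_i\mathcal X_{B_i}$ the rounding rule gives $\E[f_t(S_t)\mid x_t]=(1-\gamma)\hat f_t(x_t)+\tfrac{\gamma}{n+1}\sum_{i=0}^n f_t(B_i)$, whence $\E[f_t(S_t)\mid x_t]-\hat f_t(x_t)\le 2\gamma M$ since $|f_t|,|\hat f_t|\le M$; thus $(\mathrm I)\le 2\gamma M T$. This is the content of Lemma~\ref{continous_to_set}.

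For (II) I would un-regularize exactly as in Theorem~\ref{thm.fullregret}, reducing it to the expected regret of \bandit\ viewed as a stochastic-gradient version of \follow\ on the $H$-strongly-convex $\hat f_t^H$, plus a penalty of the form in that proof. By strong convexity, $\hat f_t^H(x_t)-\hat f_t^H(x^*)\le\nabla\hat f_t^H(x_t)^\top(x_t-x^*)$ for the fixed minimizer $x^*$; since $x_t$ is measurable with respect to the randomness before round $t$ and $\hat g_t+Hx_t$ is unbiased for $\nabla\hat f_t^H(x_t)$ (Lemma~\ref{lem.unbiased}), taking expectations gives $\E\big[\sum_t\hat f_t^H(x_t)-\sum_t\hat f_t^H(x^*)\big]\le\E\big[\sum_t(\hat g_t+Hx_t)^\top(x_t-x^*)\big]$, the pathwise regret of the noisy dual-averaging/FTRL update (step size $\propto 1/(Ht)$, \tree\ noise) that \bandit\ actually runs. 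The crucial point is that Theorem~\ref{PFTALaccuracy} cannot be applied as a black box here: its Lipschitz parameter $L$ enters as $L^2$, and the only worst-case bound on $\|\hat g_t\|$ is $O(Mn/\gamma)$ (because $\rho_i\ge\gamma/(n+1)$), which inserted naively yields only the inferior rate $\tilde O(n^2T^{11/12})$ noted in the related-work discussion. Instead I would re-open the \follow\ analysis of \citet{ST13}, which bounds this regret in terms of the \emph{squared} magnitudes of the estimated subgradients $\hat g_t+Hx_t$ together with the magnitude of the \tree\ noise (the latter controlled via \tree's $O(\tfrac{\sqrt n L\log^2 T}{\eps})$ accuracy with sensitivity surrogate $L=4M+2H\sqrt n$).

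Lemma~\ref{lem.bound} is exactly what makes this go through: $\E\|\hat g_t+Hx_t\|^2\le 2\E\|\hat g_t\|^2+2H^2n\le\tfrac{32M^2n^2}{\gamma}+2H^2n$, a factor $\gamma$ smaller than the worst-case square $O(M^2n^2/\gamma^2)$. Combining (I), the un-regularization penalty, the FTAL term (of order $1/(\gamma H)$ after applying Lemma~\ref{lem.bound} term by term in the sum $\sum_t \|\cdot\|^2/(Ht)$), and the \tree-noise term, and substituting the prescribed $H=\Theta\!\big(\tfrac{M}{\sqrt n T^{1/4}}\big)$, $L=4M+2H\sqrt n$, $D_{\mathcal K}=\sqrt n$ and $\gamma=\tfrac{n}{T^{1/4}}$ (a legitimate mixing weight since $\gamma\le 1$ whenever $T\gtrsim n^4$; for smaller $T$ the claimed $n^{3/2}T^{3/4}$ already dominates the trivial $O(MT)$ bound), a routine simplification shows the exploration term $2\gamma MT$ contributes $\Theta(MnT^{3/4})$, while every other term is $\tilde O(Mn^{3/2}T^{3/4}/\eps)$ or of strictly lower order in $T$, yielding $\E[\Reg(T)]=\tilde O\!\big(\tfrac{Mn^{3/2}T^{3/4}}{\eps}\big)$.

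I expect the main obstacle to be precisely this refined bookkeeping in (II): one must re-derive the \follow\ regret bound in a form whose dependence on the gradient estimates is a per-round sum of squared norms, so that the \emph{expected} bound of Lemma~\ref{lem.bound} (a factor $\gamma$ better than the worst case) can be inserted inside the expectation, rather than via the coarse $L^2$-prefactor of Theorem~\ref{PFTALaccuracy}; and one must check that the \tree-noise contribution, which still only ``sees'' the crude surrogate $L=4M+2H\sqrt n$, remains of strictly lower order in $T$ under the chosen parameters. A secondary, more routine, point is the measurability argument that licenses the substitution $\nabla\hat f_t(x_t)\mapsto\hat g_t$ inside $\E[\cdot]$, since $x_t$ is itself a random function of $\hat g_1,\dots,\hat g_{t-1}$ and the \tree\ noise.
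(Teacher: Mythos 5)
Your proposal is correct, and its skeleton coincides with the paper's: the same sources of loss are isolated and bounded in the same order (rounding via Lemma \ref{continous_to_set} giving $2\gamma MT$, un-regularization giving $THD_{\mathcal K}^2/2$, the \tree\ noise giving the $n(4M+2HD_{\mathcal K})^2\ln^{2.5}T/(H\eps)$ term, and a non-private FTAL regret term), and you correctly identify that the whole game is to keep the worst-case $\|\hat g_t\|=O(Mn/\gamma)$ out of the $L^2$ prefactor of Theorem \ref{PFTALaccuracy}. Where you diverge is the middle step. The paper does not linearize at the fixed comparator $x^*$; it introduces the strongly convex lower bounds $h_t$ of Equation \eqref{eq.lowerbound} and their stochastic counterparts $\tilde h_t$, and must therefore relate $\min_x\sum_t h_t(x)$ to $\min_x\sum_t\tilde h_t(x)$ --- this is Lemma \ref{lemma_h_to_hat_h}, which is precisely where Lemma \ref{lem.bound} is spent: the errors $\alpha_t=\nabla\hat f_t(x_t)-\hat g_t$ are pairwise uncorrelated (Lemma \ref{lemma.alphas}), so $\E\bigl[\|\sum_t\alpha_t\|\bigr]\le O(Mn\sqrt{T/\gamma})$, yielding the $8MnD_{\mathcal K}\sqrt{T}/\sqrt{\gamma}$ term. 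Your tower-property substitution at a fixed $x^*$ sidesteps that lemma entirely (at the cost of an oblivious adversary, which is all the theorem statement needs; the surrogate route is what supports the adaptive-adversary claim inside Lemma \ref{lemma_h_to_hat_h}). Correspondingly, you relocate Lemma \ref{lem.bound} into the FTAL term, re-deriving it as $\sum_t\E[\|\hat g_t+Hx_t\|^2]/(Ht)=O(M^2n^2\ln T/(\gamma H))$. That works, but it is more than is needed: the paper simply applies Lemma \ref{lem.htilderegret} with the worst-case Lipschitz constant $\tfrac{2M(n+1)}{\gamma}+HD_{\mathcal K}$ of $\tilde h_t$, and the resulting $O(M^2\ln T/(\gamma^2H))=O(M\sqrt n\,T^{3/4}\ln T)$ is already affordable under $\gamma=n/T^{1/4}$ and $H=M/(\sqrt n\,T^{1/4})$; the $T^{11/12}$ catastrophe of blind composition comes not from this term but from letting $O(Mn/\gamma)$ into the privacy term $nL^2\log^{2.5}T/(\eps H)$, which both you and the paper avoid by calibrating the \tree\ noise to the surrogate $L=4M+2H\sqrt n$. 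Both routes close and give the stated $\tilde O(Mn^{3/2}T^{3/4}/\eps)$; yours trades the paper's martingale-orthogonality lemma for a per-round second-moment bound inside the FTAL analysis.
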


The proof of Theorem \ref{thm:DP_bandit} relies on several key lemmas, presented in Section \ref{s.regret}.

\subsection{Regret Analysis of \bandit}\label{s.regret}



There are several sources of potential sub-optimality in the output of \bandit~that must be bounded.  Firstly, the algorithm optimizes using continuous iterates $x_t$ instead of discrete (Lemma \ref{continous_to_set}).  Secondly, it uses the $H$-regularized Lovasz extension instead of the true Lovasz extension to compute iterates (Lemma \ref{lovasz_to_regularized_lovasz}). The algorithm incurs additional loss from the noise added in \tree~to preserve privacy (Lemma \ref{h_to_tilde_h}). Due to the bandit feedback, we cannot compute an exact subgradient of the regularized Lovasz extension, and must instead use a (random) unbiased estimator (Lemma \ref{lemma_h_to_hat_h}).

The following lemmas bound the regret from these sources of error, and are used in the proof of Theorem \ref{thm:DP_bandit} presented at the end of the section.  All omitted proofs are presented in the appendix.







We start with a lemma from \citet{hazan2012submodular}, showing that the additional loss from choosing a subset of the ground set $S_t$ instead of the point in $x_t \in \mathcal{K}$ is not too large.


\begin{restatable}[\cite{hazan2012submodular}]{lemma}{cont}\label{continous_to_set}
For any submodular function $f_t:[n] \rightarrow [-M,M]$, let $x_t$ and $S_t$ be the corresponding iterates and sets as defined in \bandit, then $\mathbb{E}[f_t(S_t)] \leq \mathbb{E}[\hat{f}_t(x_t)] + 2\gamma M$.
\end{restatable}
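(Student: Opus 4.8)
The plan is to prove the statement in two stages, first relating $\mathbb{E}[f_t(S_t)]$ to $\mathbb{E}[\hat{f}_t(\tilde{x}_t)]$ where $\tilde{x}_t$ denotes the ``smoothed'' point actually sampled from, and then relating that back to $\hat{f}_t(x_t)$. Recall that in \bandit, conditioned on $x_t$, the set $S_t$ equals $B_i$ with probability $\rho_i = (1-\gamma)\mu_i + \frac{\gamma}{n+1}$, where $x_t = \sum_{i=0}^n \mu_i \mathcal{X}_{B_i}$ along the maximal chain $B_0 \subset \cdots \subset B_n$. So the natural object is the point $\tilde{x}_t := \sum_{i=0}^n \rho_i \mathcal{X}_{B_i} = (1-\gamma) x_t + \gamma \cdot \frac{1}{n+1}\sum_{i=0}^n \mathcal{X}_{B_i}$, which is exactly $x_t$ perturbed toward an interior point. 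Since the $B_i$ form a chain and the $\rho_i$ are a valid probability distribution over them, the definition of the Lovasz extension gives $\mathbb{E}[f_t(S_t) \mid x_t] = \sum_{i=0}^n \rho_i f_t(B_i) = \hat{f}_t(\tilde{x}_t)$, using the chain-based formula in Definition~\ref{def.lovasz}.

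The second stage bounds $\hat{f}_t(\tilde{x}_t) - \hat{f}_t(x_t)$. I would write $\tilde{x}_t - x_t = \gamma\big(\frac{1}{n+1}\sum_{i=0}^n \mathcal{X}_{B_i} - x_t\big)$; both $\frac{1}{n+1}\sum_{i=0}^n \mathcal{X}_{B_i}$ and $x_t$ lie in $\mathcal{K} = [0,1]^n$, so the difference vector has each coordinate in $[-1,1]$, hence $\|\tilde x_t - x_t\|_\infty \le \gamma$ and $\|\tilde x_t - x_t\|_1 \le \gamma n$. By Lemma~\ref{lemma_properties_of_extension} combined with the bound from \cite{JB11}, every subgradient $g$ of $\hat{f}_t$ satisfies $\|g\|_1 \le 4M$, so $\hat f_t$ is $4M$-Lipschitz with respect to $\|\cdot\|_\infty$ (equivalently, by Lemma~\ref{shalev_lipschitz} applied with the $\|\cdot\|_\infty$/$\|\cdot\|_1$ dual pair). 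Actually the cleanest route is: by convexity, $\hat f_t(\tilde x_t) - \hat f_t(x_t) \le \nabla \hat f_t(\tilde x_t)^\top (\tilde x_t - x_t) \le \|\nabla \hat f_t(\tilde x_t)\|_1 \|\tilde x_t - x_t\|_\infty \le 4M \cdot \gamma$. Hmm, I should double-check whether I want $2\gamma M$ or $4\gamma M$; \cite{hazan2012submodular} gets $2\gamma M$, which suggests a slightly tighter argument — perhaps bounding $f_t(B_i) - \hat f_t(x_t)$ directly via $|f_t(B_i)| \le M$ and the fact that $\sum_i \mu_i f_t(B_i) = \hat f_t(x_t)$, so $\hat f_t(\tilde x_t) - \hat f_t(x_t) = \sum_i (\rho_i - \mu_i) f_t(B_i) = \gamma \sum_i (\frac{1}{n+1} - \mu_i) f_t(B_i)$, and since $\sum_i |\frac{1}{n+1} - \mu_i| \le 2$ (total variation between two distributions) this is at most $2\gamma M$.

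So the recommended argument is the last one: work directly with the convex-combination formula, write $\mathbb{E}[f_t(S_t)\mid x_t] - \hat f_t(x_t) = \gamma\sum_{i=0}^n\big(\tfrac{1}{n+1} - \mu_i\big)f_t(B_i)$, bound $|f_t(B_i)| \le M$ and $\sum_i |\tfrac{1}{n+1}-\mu_i| \le 2$, and take expectation over $x_t$. The only mild subtlety — and the one place to be careful — is that the chain $B_0 \subset \cdots \subset B_n$ and the weights $\mu_i$ are themselves random (they depend on $x_t$), but since the bound $2\gamma M$ holds pointwise for every realization of $x_t$, the tower property immediately gives $\mathbb{E}[f_t(S_t)] \le \mathbb{E}[\hat f_t(x_t)] + 2\gamma M$. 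There is no real obstacle here; the statement is essentially a restatement of the exploration cost, and the main thing to get right is the combinatorial identity $\mathbb{E}[f_t(S_t)\mid x_t] = \sum_i \rho_i f_t(B_i)$ together with the total-variation bound on $\sum_i|\tfrac{1}{n+1} - \mu_i|$.
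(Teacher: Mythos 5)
Your recommended argument is correct and is essentially identical to the paper's proof: both compute $\mathbb{E}[f_t(S_t)\mid x_t]-\hat f_t(x_t)=\sum_{i=0}^n(\rho_i-\mu_i)f_t(B_i)=\gamma\sum_{i=0}^n\bigl(\tfrac{1}{n+1}-\mu_i\bigr)f_t(B_i)$, bound it by $\gamma M\sum_i\bigl(\tfrac{1}{n+1}+\mu_i\bigr)=2\gamma M$, and conclude by taking expectation over $x_t$. The Lipschitz detour you correctly discarded would only give $4\gamma M$; the direct convex-combination computation you settled on is the right (and the paper's) one.
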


%

As in Section \ref{s.fullinfo}, the regret guarantees of \follow~require input functions that are strongly convex, but the Lovasz extension $\hat{f}$ of submodular $f$ is only convex.  We again regularize the Lovasz extension to ensure that it is strongly convex. Recall the regularized Lovasz extension, as defined in Equation \ref{eq_reglov}: 
\begin{equation*}
\hat{f}^H(x) = \hat{f}(x) + \frac{H}{2} ||x||^2. 
\end{equation*} 
Recall also that $\hat{f}^H_t$ is $H$-strongly convex, satisfies $\nabla \hat{f}^H_t (x) = \nabla \hat{f}_t(x) + Hx$, and is $(4M + H D_{\mathcal{K}})$-Lipschitz continuous. Since $\hat{g}_t$ is an unbiased estimate of the subgradient of the Lovasz extension at point $x$ (i.e., $\mathbb{E}[\hat{g}_t | x] = \nabla \hat{f}_t(x)$ by Lemma \ref{lem.unbiased}), then $\mathbb{E}[ \hat{g}_t + H x] = \nabla \hat{f}^H_t(x)$. 


We now show that the additional regret from regularized Lovasz extension instead of the Lovasz extension is not too high. The following lemma was stated without proof in \cite{ST13}; we provide a proof in the appendix for completeness.

\begin{restatable}[\cite{ST13}]{lemma}{lovasz}\label{lovasz_to_regularized_lovasz}
Let $\{f_t\}_{t=1}^T$ be any sequence of submodular functions, let $\{\hat{f}_t\}_{t=1}^T$ be their Lovasz extensions, let $\{\hat{f}^H_t\}_{t=1}^T$ be their regularized Lovasz extensions, let $\{x_t\}_{t=1}^T$ be any sequence of elements in $\mathcal{K}$. It holds that
\begin{align*}
\sum_{t=1}^T \hat{f}_t(x_t) - \min_{x\in \mathcal{K}} \sum_{t=1}^T \hat{f}_t (x) \leq\sum_{t=1}^T \hat{f}^H_t(x_t) - \min_{x\in \mathcal{K}} \sum_{t=1}^T \hat{f}^H_t (x) + \frac{TH}{2}D^2_{\mathcal{K}}.
\end{align*}
\end{restatable}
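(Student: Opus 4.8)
The plan is to relate the regret of the sequence $\{x_t\}$ measured against the Lovasz extensions $\hat f_t$ to the regret measured against the regularized Lovasz extensions $\hat f^H_t$, by carefully tracking the regularization term $\frac{H}{2}\|x\|^2$ on both sides of the regret expression. This is essentially the same two-step argument that already appears in the proof of Theorem \ref{thm.fullregret}, just stated as a standalone lemma.

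First I would bound the running cost. For every $t$ and every $x_t \in \mathcal{K}$ we have $\frac{H}{2}\|x_t\|^2 \geq 0$, hence $\hat f_t(x_t) \leq \hat f_t(x_t) + \frac{H}{2}\|x_t\|^2 = \hat f^H_t(x_t)$. Summing over $t$ gives $\sum_{t=1}^T \hat f_t(x_t) \leq \sum_{t=1}^T \hat f^H_t(x_t)$.

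Second I would bound the comparator term in the other direction. Let $x^\star = \argmin_{x\in\mathcal{K}} \sum_{t=1}^T \hat f_t(x)$. Then
\begin{align*}
\min_{x\in\mathcal{K}} \sum_{t=1}^T \hat f^H_t(x) &\leq \sum_{t=1}^T \hat f^H_t(x^\star) = \sum_{t=1}^T \hat f_t(x^\star) + \frac{TH}{2}\|x^\star\|^2 \\
&= \min_{x\in\mathcal{K}} \sum_{t=1}^T \hat f_t(x) + \frac{TH}{2}\|x^\star\|^2 \leq \min_{x\in\mathcal{K}} \sum_{t=1}^T \hat f_t(x) + \frac{TH}{2}D^2_{\mathcal{K}},
\end{align*}
where the last inequality uses $\|x^\star\|^2 \leq D^2_{\mathcal{K}}$ since $\mathbf{0}\in\mathcal{K}$ and $D_{\mathcal{K}}=\sup_{x,y\in\mathcal{K}}\|x-y\|_2$. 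Rearranging, $-\min_{x\in\mathcal{K}}\sum_{t=1}^T \hat f_t(x) \leq -\min_{x\in\mathcal{K}}\sum_{t=1}^T \hat f^H_t(x) + \frac{TH}{2}D^2_{\mathcal{K}}$.

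Adding the two displayed bounds yields exactly the claimed inequality. There is no real obstacle here: the only point requiring a word of care is the bound $\|x^\star\|^2 \le D^2_{\mathcal{K}}$, which follows because $\mathcal{K}=[0,1]^n$ contains the origin, so the distance from any point of $\mathcal{K}$ to the origin is at most the diameter; alternatively one can just note directly that for $x\in[0,1]^n$, $\|x\|^2 \le n = D^2_{\mathcal{K}}$. The lemma holds for an arbitrary sequence $\{x_t\}$ — no property of the algorithm's iterates is used — which is why it can be cited cleanly in the regret analysis of \bandit.
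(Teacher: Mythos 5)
Your proposal is correct and is essentially the same argument as the paper's: the paper also drops the nonnegative term $\frac{H}{2}\|x_t\|^2$ from the running cost and evaluates the regularized sum at the minimizer of $\sum_t \hat f_t$, bounding $\|\cdot\|^2$ by $D_{\mathcal{K}}^2$. Your explicit justification that $\|x^\star\|\le D_{\mathcal{K}}$ (using $\mathbf{0}\in\mathcal{K}$) is a detail the paper leaves implicit.
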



It will be useful in our analysis to define $h_t(x)$, which is a quadratic lower bound on the regularized Lovasz extension $\hat{f}^H_t (x)$ since the regularized Lovasz extension is $H$-strongly convex:
\begin{equation}\label{eq.lowerbound}
h_t(x) =\hat{f}^H_t(x_t) + \nabla \hat{f}^H_t(x_t)^{\top} (x-x_t) + \frac{H}{2} ||x-x_t||^2.
\end{equation}
Note that $h_t(x)$ is $(4M + H D_{\mathcal{K}})$-Lipschitz continuous. Indeed $||\nabla h_t(x)||=||\nabla\hat{f}_t^H(x) +H(x-x_t)||\leq 4M+ H D_\mathcal{K}$. 

Our next lemma shows that analyzing this lower bound instead of the regularized Lovasz extension does not harm regret by too much.

\begin{restatable}{lemma}{regular}\label{regularized_lovasz_to_h}
Let $\{f_t\}_{t=1}^T$ be any sequence of submodular functions, let $\{\hat{f}^H_t\}_{t=1}^T$ be their regularized Lovasz extensions, let $\{x_t\}_{t=1}^T$ be any sequence of elements in $\mathcal{K}$. It holds that
\begin{align*}
\sum_{t=1}^T \hat{f}^H_t(x_t) - \min_{x\in \mathcal{K}} \sum_{t=1}^T \hat{f}^H_t (x) \leq \sum_{t=1}^T h_t(x_t) - \min_{x\in \mathcal{K}} \sum_{t=1}^T h_t(x).
\end{align*} 
\end{restatable}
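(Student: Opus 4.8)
The plan is to prove the inequality pointwise at the level of the summands on the left and right, exploiting the fact that $h_t$ is the quadratic Taylor-type lower bound on $\hat f^H_t$ guaranteed by $H$-strong convexity. First I would observe the two key identities that come directly from the definition \eqref{eq.lowerbound}: evaluating $h_t$ at $x_t$ gives $h_t(x_t) = \hat f^H_t(x_t)$, since the linear and quadratic terms vanish; and for every $x \in \mathcal{K}$, the definition of $H$-strong convexity of $\hat f^H_t$ gives exactly $\hat f^H_t(x) \ge \hat f^H_t(x_t) + \nabla \hat f^H_t(x_t)^\top(x - x_t) + \frac{H}{2}\|x-x_t\|^2 = h_t(x)$. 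So $h_t \le \hat f^H_t$ everywhere, with equality at $x_t$.

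**Carrying out the argument.**

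With these two facts in hand the inequality is immediate. Summing $h_t(x_t) = \hat f^H_t(x_t)$ over $t$ gives $\sum_{t=1}^T \hat f^H_t(x_t) = \sum_{t=1}^T h_t(x_t)$, so the ``algorithm cost'' terms on the two sides are literally equal. For the comparator terms, let $x^\star \in \argmin_{x \in \mathcal{K}} \sum_{t=1}^T h_t(x)$. Then
\begin{align*}
\min_{x \in \mathcal{K}} \sum_{t=1}^T \hat f^H_t(x) \;\le\; \sum_{t=1}^T \hat f^H_t(x^\star) \;\le\; \sum_{t=1}^T \big( h_t(x^\star) + \tfrac{H}{2} D^2_{\mathcal{K}} \big) \;=\; \min_{x \in \mathcal{K}} \sum_{t=1}^T h_t(x) + \tfrac{TH}{2} D^2_{\mathcal{K}},
\end{align*}
where the first inequality is just that a minimum is at most any particular value, and the second uses $h_t \le \hat f^H_t$. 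Wait — I need the inequality in the other direction for the comparator, so let me instead just use $h_t \le \hat f^H_t$ directly: for any $x$, $\sum_t h_t(x) \le \sum_t \hat f^H_t(x)$, hence $\min_x \sum_t h_t(x) \le \min_x \sum_t \hat f^H_t(x)$, i.e. $-\min_x \sum_t \hat f^H_t(x) \le -\min_x \sum_t h_t(x)$. Combining with the equality of the algorithm-cost terms gives
\begin{align*}
\sum_{t=1}^T \hat f^H_t(x_t) - \min_{x \in \mathcal{K}} \sum_{t=1}^T \hat f^H_t(x) \;\le\; \sum_{t=1}^T h_t(x_t) - \min_{x \in \mathcal{K}} \sum_{t=1}^T h_t(x),
\end{align*}
which is the claim; in fact the $\frac{TH}{2}D^2_{\mathcal{K}}$ slack in the lemma statement is not even needed, so one gets the bound with room to spare.

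**Where the difficulty lies.**

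There is essentially no obstacle here: this lemma is the standard ``linearization'' step from the analysis of Follow-the-Regularized-Leader, and the only things being used are the definition of $h_t$ and the definition of strong convexity, both recalled in the excerpt. The one point requiring a little care is making sure the direction of each inequality lines up — the algorithm-cost terms must be handled by the \emph{equality} $h_t(x_t) = \hat f^H_t(x_t)$ (not merely $h_t \le \hat f^H_t$, which would go the wrong way once negated), while the comparator terms are handled by the global inequality $h_t \le \hat f^H_t$. Once that bookkeeping is done correctly the proof is two or three lines; I would write it exactly as above, and note explicitly that the stated additive term is a (harmless) over-estimate so the lemma as phrased follows a fortiori.
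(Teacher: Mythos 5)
Your proof is correct and is essentially identical to the paper's: both use the equality $h_t(x_t)=\hat f^H_t(x_t)$ for the algorithm-cost terms and the pointwise lower bound $h_t\le\hat f^H_t$ (from $H$-strong convexity) for the comparator terms. One small note: the lemma as stated has no $\frac{TH}{2}D^2_{\mathcal{K}}$ slack term (you may be conflating it with Lemma~\ref{lovasz_to_regularized_lovasz}), but since you prove the exact inequality without slack, this does not affect correctness.
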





For our analysis, we introduce random functions $\tilde{h}_t$ that satisfy $\mathbb{E}[\tilde{h}_t(x)] = h_t(x)$ for all $x \in \mathcal{K}$.  Define $\tilde{h}_t$ as follows:
\begin{align*}
\tilde{h}_t(x) = \hat{f}^H_t(x_t) - \nabla \hat{f}^H_t (x_t)^\top x_t + (\hat{g}_t + H x_t)^\top x + \frac{H}{2} ||x-x_t||^2.
\end{align*}
The function $\tilde{h}_t$ is $(\frac{2M(n+1)}{\gamma} + H D_{\mathcal{K}})$-Lipschitz continuous because $\|\nabla \tilde{h}_t \| = \| \hat{g}_t + H x_t + H (x-x_t) \| = \|\frac{1}{\rho}f(S) + H x\| \leq \frac{2M(n+1)}{\gamma} + H D_{\mathcal{K}}$.

If we were in a non-private setting, we would define the update step to $x_{t+1}$ in \bandit~as,
\begin{align*}
x_{t+1} := \arg \min_{x\in \mathcal{K}} \sum_{\tau=1}^t \tilde{h}_{\tau} ( x) = \arg \min_{x\in \mathcal{K}} \sum_{\tau = 1}^t (\hat{g}_t+H x_\tau)^\top x + \frac{H}{2}\sum_{\tau = 1}^t \|x-x_\tau\|^2_2,  
\end{align*} 
where the second equality holds since the first two terms that define $\tilde{h}_t$ do not contain $x$. However, since we desire a differentially private algorithm, we will instead use the private partial sum $\hat{v}_t$ from \tree~to approximate $\sum_{\tau=1}^t \hat{g}_\tau + H x_\tau$.  Thus the private update is, 
 \begin{align*}
 x_{t+1} = \arg \min_{x\in \mathcal{K}} \hat{v}_{t+1}^\top x+ \frac{H}{2} \sum_{\tau=1}^t \|x-x_\tau\|^2_2.
 \end{align*}

Lemma \ref{h_to_tilde_h}, due to \cite{ST13}, bounds the additional regret from using the private updates $x_t$.  Lemma \ref{lemma_h_to_hat_h}, bounds the additional regret from using $\tilde{h}$ instead of $h$.  Lemma \ref{lemma_h_to_hat_h} closely parallels Lemma 15 of \cite{ST13}, although we achieve a tighter bound that scales as $\Theta(T^{1/2}/\gamma^{1/2})$, compared to their bound that scales as $\Theta(T^{1/2}/\gamma)$.




\begin{restatable}[\cite{ST13}, Lemma 8]{lemma}{tildeh}\label{h_to_tilde_h}
\begin{align*}
\mathbb{E}\left[\sum_{t=1}^T h_t (x_t) - \min_{x\in \mathcal{K}} \sum_{t=1}^T h(x)\right] \leq \mathbb{E}\left[\sum_{t=1}^T h_t (\hat{x}_t) -  \min_{x\in \mathcal{K}} \sum_{t=1}^T h(x) \right] + \frac{8n(4M+2HD_{\mathcal{K} })^2 \ln^{2.5}T}{H \epsilon},
\end{align*}
where the expectation is taken over all the randomness of \bandit. 
\end{restatable}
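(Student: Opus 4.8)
The plan is to reduce this to the regret guarantee of \follow~(Theorem \ref{PFTALaccuracy}) applied to the functions $\{\tilde{h}_t\}_{t=1}^T$, together with the observation that the iterates $x_t$ of \bandit~are precisely the iterates that \follow~produces on this sequence. First I would note that each $\tilde{h}_t$ is $H$-strongly convex (its Hessian is $H I$) and, as established just before the lemma, is $\left(\frac{2M(n+1)}{\gamma} + H D_{\mathcal{K}}\right)$-Lipschitz over $\mathcal{K}$. The update rule in \bandit, $x_{t+1} = \argmin_{x\in\mathcal{K}} \hat{v}_t^\top x + \frac{H}{2}\sum_{\tau=1}^t \|x-x_\tau\|^2$, is exactly the \follow~update on $\{\tilde{h}_t\}$, since the subgradients passed to \tree~are $\hat{g}_t + Hx_t = \nabla\tilde{h}_t(x_t)$ and the $x$-independent terms in $\tilde{h}_t$ do not affect the argmin. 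Here $\hat{x}_t$ denotes the iterate produced by the \emph{noiseless} FTRL update $\argmin_{x\in\mathcal{K}} \sum_{\tau=1}^{t-1}\tilde{h}_\tau(x)$; the gap between running FTRL on the true partial sums versus the \tree-perturbed partial sums is what Lemma 8 of \citet{ST13} controls.

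The key steps, in order: (1) invoke the standard FTRL-vs-private-FTRL comparison from \citet{ST13} to bound $\sum_t \tilde{h}_t(x_t) - \sum_t \tilde{h}_t(\hat{x}_t)$ by a term of order $\frac{n(\text{Lip})^2\log^{2.5}T}{H\epsilon}$, where the Lipschitz constant that enters is that of $h_t$, namely $4M + 2HD_{\mathcal{K}}$ — crucially \emph{not} the much larger Lipschitz constant of $\tilde{h}_t$, because the noise-comparison argument only sees the deterministic displacement of the argmin induced by \tree's noise, and the relevant smoothness is that of the averaged function $h_t$; (2) take expectations and use $\E[\tilde{h}_t(x)\mid x_t] = h_t(x)$ for every fixed $x\in\mathcal{K}$, which converts $\E[\sum_t\tilde{h}_t(x_t)]$ into $\E[\sum_t h_t(x_t)]$ and $\E[\sum_t\tilde{h}_t(\hat{x}_t)]$ into $\E[\sum_t h_t(\hat{x}_t)]$ (noting $\hat{x}_t$ is measurable with respect to the randomness through time $t-1$, so the conditioning is valid); (3) subtract $\min_{x\in\mathcal{K}}\sum_t h_t(x)$ from both sides, which appears identically on each side and so passes through untouched, yielding exactly the claimed inequality.

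The main obstacle I expect is step (1): carefully citing the \citet{ST13} analysis so that the Lipschitz constant appearing in the additive penalty is $4M + 2HD_{\mathcal{K}}$ (the Lipschitz constant of $h_t$) rather than the genuinely large Lipschitz constant $\frac{2M(n+1)}{\gamma} + HD_{\mathcal{K}}$ of the random function $\tilde h_t$. This requires unpacking why their Lemma 8 bound depends on the smoothness of the \emph{expected}/averaged objective rather than the per-step random objective — essentially, the perturbation $\hat{v}_t - \sum_{\tau\le t}\nabla\tilde h_\tau(x_\tau)$ from \tree~is independent of the $\hat g_\tau$'s and has $L_2$-norm $O(\sqrt{n}L\log^2 T/\epsilon)$ with the $L$ here being the bound on $\|\hat g_t + Hx_t\|$ in norm used for \tree, and translating a displacement of size $\eta$ in the argmin into a regret change requires only the curvature $H$ and the $h_t$-Lipschitz bound. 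Once the bookkeeping on which Lipschitz constant enters where is pinned down, steps (2) and (3) are routine applications of linearity of expectation and the fact that the minimum term is shared.
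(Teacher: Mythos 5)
Your proposal is correct and follows essentially the same route as the paper: identify $x_t$ and $\hat{x}_t$ as the minimizers of the same strongly convex objective with and without the \tree~noise $\zeta_{t-1}$, bound their displacement by $O(\|\zeta_{t-1}\|/(Ht))$ via $Ht$-strong convexity, convert to a function-value gap using the Lipschitz constant of $h_t$ (namely $4M+2HD_{\mathcal{K}}$), and sum using the Gamma-distributed noise bound and $\sum_t 1/t \le \ln T$. The one point to tighten is the order of operations in your step (1): as written you compare $\tilde{h}_t(x_t)$ with $\tilde{h}_t(\hat{x}_t)$ while invoking $h_t$'s Lipschitz constant, which is not valid pointwise since $\tilde{h}_t$ only satisfies the much larger bound $\frac{2M(n+1)}{\gamma}+HD_{\mathcal{K}}$; the fix — which the paper adopts and which you correctly identify as the crux — is to perform the comparison directly on $h_t$ (equivalently, take the conditional expectation first, using that $x_t$ and $\hat{x}_t$ are measurable with respect to the history before $\hat{g}_t$ is drawn), after which the small Lipschitz constant legitimately enters.
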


\begin{restatable}{lemma}{hath}\label{lemma_h_to_hat_h}
Conditioning on the internal randomness of \tree, 
\begin{align*}
\mathbb{E}\left[\sum_{t=1}^T h_t(\hat{x}_t) - \min_{x\in \mathcal{K}} \sum_{t=1}^T h_t(x)\right] \leq \mathbb{E}\left[\sum_{t=1}^T \tilde{h}_t(\hat{x}_t) - \min_{x\in \mathcal{K}} \sum_{t=1}^T \tilde{h}_t(x)\right] + \frac{8MnD_{\mathcal{K}} \sqrt{T}}{\sqrt{\gamma}},
\end{align*}
even when the sequence of functions $\{f_t\}_{t=1}^T$ is chosen by an adaptive adversary. Here the expectation is taken over the randomness of \submod~used to build estimates of the subgradient $\{\hat{g}_t\}_{t=1}^T$. 
\end{restatable}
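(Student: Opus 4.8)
The plan is to compare the regret of a Follow-The-Leader (FTL) type procedure run on the true quadratic lower bounds $h_t$ against the same procedure run on the random surrogates $\tilde h_t$, exploiting the fact that $\mathbb{E}[\tilde h_t(x) \mid x_t] = h_t(x)$ for every $x \in \mathcal K$ together with the variance bound on $\hat g_t$ from Lemma \ref{lem.bound}. First I would recall that $h_t$ and $\tilde h_t$ differ only in their linear term: writing $h_t(x) - \tilde h_t(x) = (\nabla \hat f^H_t(x_t) - (\hat g_t + Hx_t))^\top x + \text{const}$, and noting $\mathbb{E}[\hat g_t + Hx_t \mid x_t] = \nabla \hat f^H_t(x_t)$, the difference $d_t := \nabla \hat f^H_t(x_t) - \hat g_t - Hx_t$ is a martingale-difference-type vector with $\mathbb{E}[d_t \mid x_t] = 0$ and (by Lemma \ref{lem.bound} and Jensen) $\mathbb{E}[\|d_t\|_2^2] \le \mathbb{E}[\|\hat g_t\|_2^2] + \text{(lower order)} = O(M^2 n^2/\gamma)$; the $Hx_t$ and $\nabla\hat f^H_t(x_t)$ pieces contribute only $O(M)$-scale terms which are dominated.

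Next I would invoke the standard ``be-the-leader'' / FTL telescoping argument: for any sequence of functions $g_1,\dots,g_T$, if $\hat x_t \in \argmin_{x} \sum_{\tau \le t-1} g_\tau(x)$ then $\sum_t g_t(\hat x_t) - \min_x \sum_t g_t(x) \le \sum_t (g_t(\hat x_t) - g_t(\hat x_{t+1}))$. Applying this to both $\{h_t\}$ and $\{\tilde h_t\}$ and subtracting, the gap between the two regret quantities reduces to a sum of terms of the form $d_t^\top(\hat x_t - \hat x_{t+1})$ plus cross terms involving the quadratic parts (which are identical and cancel). The iterate displacement $\|\hat x_t - \hat x_{t+1}\|_2$ for an FTL update on $H$-strongly-convex aggregates of the $\tilde h_\tau$'s is controlled by the Lipschitz constant of the new function divided by $tH$ — but crucially one should take the expectation \emph{before} bounding, so that $\mathbb{E}[d_t^\top(\hat x_t - \hat x_{t+1})]$ is handled via Cauchy–Schwarz as $\sqrt{\mathbb{E}\|d_t\|^2}\cdot \sqrt{\mathbb{E}\|\hat x_t - \hat x_{t+1}\|^2}$, and the displacement bound should itself be taken in expectation so only $\sqrt{\mathbb{E}\|\hat g_t\|^2} = O(Mn/\sqrt\gamma)$ enters rather than the worst-case $O(Mn/\gamma)$. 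Summing $\sum_{t=1}^T \frac{1}{tH}\cdot O(M^2 n^2/\gamma)$ would give the loose $\Theta(T^{1/2}/\gamma)$ bound of \cite{ST13}; the improvement to $\Theta(T^{1/2}/\gamma^{1/2})$ comes from this two-sided Cauchy–Schwarz split, where one factor of $1/\sqrt\gamma$ is traded for the $\sqrt T$ from $\sum_t 1/\sqrt t \approx \sqrt T$ and the $D_{\mathcal K}$ from the diameter, yielding $O(MnD_{\mathcal K}\sqrt T/\sqrt\gamma)$ after the sum.

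The main obstacle I anticipate is making the adaptive-adversary claim rigorous: since $f_t$ (hence $h_t$, $\tilde h_t$, and the conditional variance of $\hat g_t$) may depend on the algorithm's past randomness, the $d_t$ do not form an independent sequence, and one must carefully set up a filtration $\mathcal F_{t-1}$ (containing everything through the choice of $x_t$) so that $\mathbb{E}[d_t \mid \mathcal F_{t-1}] = 0$ and the variance bound holds conditionally. Then $\hat x_t$ is $\mathcal F_{t-1}$-measurable while $\hat x_{t+1}$ is not, so the cross term $\mathbb{E}[d_t^\top \hat x_{t+1}]$ does not vanish and must be bounded by Cauchy–Schwarz as described rather than dropped — this is exactly where the displacement-in-expectation estimate is needed, and getting the conditioning exactly right (so that the $\sqrt\gamma$-improvement survives) is the delicate part. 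The remaining steps — the be-the-leader telescoping, the strong-convexity displacement bound, and the final $\sum 1/\sqrt t$ summation — are routine.
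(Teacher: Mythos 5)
Your proposal correctly identifies the central objects --- the martingale differences $\alpha_t = \nabla\hat f_t(x_t) - \hat g_t$ with $\mathbb{E}[\alpha_t\mid\mathcal F_{t-1}]=0$ and $\mathbb{E}\|\alpha_t\|_2^2 = O(M^2n^2/\gamma)$ --- but the route you take from there has a structural flaw and does not deliver the stated bound. The be-the-leader telescoping inequality $\sum_t g_t(\hat x_t)-\min_x\sum_t g_t(x)\le\sum_t\bigl(g_t(\hat x_t)-g_t(\hat x_{t+1})\bigr)$ requires $\hat x_{t+1}$ to minimize $\sum_{\tau\le t}g_\tau$; here the iterates $\hat x_t$ are the FTL iterates for the surrogate sequence $\{\tilde h_t\}$, not for $\{h_t\}$, so you cannot ``apply this to both $\{h_t\}$ and $\{\tilde h_t\}$ and subtract.'' Moreover, even granting a reduction of the gap to $\sum_t\mathbb{E}[d_t^\top(\hat x_t-\hat x_{t+1})]$, the strong-convexity displacement bound gives $\|\hat x_t-\hat x_{t+1}\|=O(L_t/(tH))$, so the sum produces a $\ln T/H$ factor; there is no step in your decomposition from which the ``$\sum_t 1/\sqrt t\approx\sqrt T$'' you invoke would arise, and chasing the constants along your route yields a bound scaling as $1/(\gamma H)$ rather than $\sqrt T/\sqrt\gamma$. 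Your own accounting of where the $\gamma^{-1/2}$ comes from is therefore not substantiated.

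The paper's argument avoids iterate displacements entirely. First, $\mathbb{E}[\sum_t h_t(\hat x_t)]=\mathbb{E}[\sum_t\tilde h_t(\hat x_t)]$ holds \emph{exactly}, by the tower property and backwards induction on $t$, because $\hat x_t$ is measurable with respect to the randomness up to time $t-1$ and $\mathbb{E}[\tilde h_t(x)\mid\mathcal F_{t-1}]=h_t(x)$ pointwise. The entire error therefore sits in the comparator terms, and it suffices to control $\bigl|\sum_t(h_t(x)-\tilde h_t(x))\bigr|\le\|x\|_2\,\bigl\|\sum_t\alpha_t\bigr\|_2\le D_{\mathcal K}\bigl\|\sum_t\alpha_t\bigr\|_2$ at the fixed minimizer $x^*$ of $\sum_t h_t$, which is where the diameter $D_{\mathcal K}$ enters. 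The $\sqrt{T}/\sqrt{\gamma}$ then comes from the cross-term cancellation $\mathbb{E}[\alpha_t^\top\alpha_{t'}]=0$ for $t\ne t'$ (Lemma \ref{lemma.alphas}), giving $\mathbb{E}\bigl\|\sum_t\alpha_t\bigr\|_2\le\sqrt{\sum_t\mathbb{E}\|\alpha_t\|_2^2}=O(Mn\sqrt{T/\gamma})$. This martingale-orthogonality step applied to the \emph{aggregate} error at a fixed comparator --- not a Cauchy--Schwarz split against per-step iterate displacements --- is the missing idea that simultaneously produces the $\sqrt T$ and the improvement from $1/\gamma$ to $1/\sqrt\gamma$.
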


Our final lemma comes from \citet{ST13}, and bounds the regret of $\tilde{h}_t$ on non-private iterates $\hat{x}_t$.

\begin{lemma}[\citep{ST13}]\label{lem.htilderegret}
Follow The Approximate Leader run on $H$-strongly convex and $L$-Lipschitz functions $\{f_t\}_{t=1}^T$ guaranteesm
\begin{align*}
\sum_{t=1}^T f_t(\hat{x}_t) - \min_{x\in X} \sum_{t=1}^T f_t(x) \leq \frac{2(L+H D_X)^2 \ln(T)}{H},
\end{align*}
where $D_X$ is the diameter of the set $X$.
\end{lemma}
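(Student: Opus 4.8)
The plan is to prove the standard Follow-The-Approximate-Leader bound from scratch, exploiting that FTAL is exactly Follow-The-Leader run on the quadratic surrogates of the $f_t$. Recall that at each round FTAL forms the quadratic lower bound $q_t(x) = f_t(\hat x_t) + \nabla f_t(\hat x_t)^\top (x-\hat x_t) + \frac H2 \|x-\hat x_t\|^2$ of the $H$-strongly convex function $f_t$ at the current iterate, and chooses the next iterate as the exact leader of these surrogates, $\hat x_{t+1} = \argmin_{x\in X}\sum_{\tau=1}^t q_\tau(x)$ (with $\hat x_1$ arbitrary). The argument has three stages: reduce the regret against the true $f_t$ to regret against the surrogates $q_t$; bound the surrogate regret by a Be-The-Leader telescoping; and control each telescoped term through stability of the leader.

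First I would reduce to the surrogates. Because $f_t$ is $H$-strongly convex, its definition gives the two facts $q_t(x) \le f_t(x)$ for every $x\in X$ and $q_t(\hat x_t) = f_t(\hat x_t)$. The first implies $\min_{x\in X}\sum_t q_t(x) \le \min_{x\in X}\sum_t f_t(x)$ and the second implies $\sum_t f_t(\hat x_t) = \sum_t q_t(\hat x_t)$, so that the left-hand side of the lemma is at most $\sum_t q_t(\hat x_t) - \min_{x\in X}\sum_t q_t(x)$. It therefore suffices to bound the FTL regret on the surrogate sequence.

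Next I would apply the Be-The-Leader lemma to the surrogates. Writing $Q_t = \sum_{\tau=1}^t q_\tau$, a one-line induction on the identity $\hat x_{t+1} = \argmin_X Q_t$ shows $\sum_{t=1}^T q_t(\hat x_{t+1}) \le \min_{x\in X} Q_T(x)$, and hence the surrogate regret is at most $\sum_{t=1}^T\bigl(q_t(\hat x_t) - q_t(\hat x_{t+1})\bigr)$. To bound a single term, I would use that $Q_{t-1}$ is $(t-1)H$-strongly convex with constrained minimizer $\hat x_t$ and $Q_t$ is $tH$-strongly convex with constrained minimizer $\hat x_{t+1}$; adding the two strong-convexity-plus-optimality inequalities $Q_{t-1}(\hat x_{t+1}) \ge Q_{t-1}(\hat x_t) + \tfrac{(t-1)H}{2}\|\hat x_t-\hat x_{t+1}\|^2$ and $Q_t(\hat x_t) \ge Q_t(\hat x_{t+1}) + \tfrac{tH}{2}\|\hat x_t-\hat x_{t+1}\|^2$ and cancelling the common $Q_{t-1}$ terms yields $q_t(\hat x_t)-q_t(\hat x_{t+1}) \ge \tfrac{(2t-1)H}{2}\|\hat x_t-\hat x_{t+1}\|^2$. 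On the other hand, $q_t$ is $(L+HD_X)$-Lipschitz over $X$ (its gradient is $\nabla f_t(\hat x_t) + H(x-\hat x_t)$, bounded by $L+HD_X$ using Lemma \ref{shalev_lipschitz} together with $\|x-\hat x_t\|\le D_X$), so $q_t(\hat x_t)-q_t(\hat x_{t+1}) \le (L+HD_X)\|\hat x_t - \hat x_{t+1}\|$. Combining the two bounds gives $\|\hat x_t-\hat x_{t+1}\| \le \frac{2(L+HD_X)}{(2t-1)H}$ and therefore $q_t(\hat x_t)-q_t(\hat x_{t+1}) \le \frac{2(L+HD_X)^2}{(2t-1)H}$. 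Summing over $t$ and using $\sum_{t=1}^T \frac{1}{2t-1} = O(\ln T)$ yields the claimed bound $\frac{2(L+HD_X)^2\ln T}{H}$.

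The main obstacle is the minimizer-stability step: establishing the quadratic lower bound $q_t(\hat x_t)-q_t(\hat x_{t+1})\ge \tfrac{(2t-1)H}{2}\|\hat x_t-\hat x_{t+1}\|^2$ requires that the strong-convexity inequality for a constrained minimizer hold on all of $X$, including when the leader lies on the boundary of $X$. I would justify this through the first-order variational inequality $\langle \nabla Q_{t-1}(\hat x_t),\, x-\hat x_t\rangle \ge 0$ for all $x\in X$ (rather than $\nabla Q_{t-1}(\hat x_t)=0$), which combined with $(t-1)H$-strong convexity gives the needed one-sided bound even at the boundary. The remaining pieces — the Be-The-Leader induction and the harmonic sum — are routine, and the $f$-to-$q$ reduction is precisely where the strong convexity of the input functions is essential.
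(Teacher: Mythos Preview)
The paper does not give its own proof of this lemma; it is simply cited from \citet{ST13} (and ultimately from \citet{hazan2007logarithmic}). Your proposal is a faithful reconstruction of that standard FTAL argument: reduce to the quadratic surrogates $q_t$ via $H$-strong convexity, apply the Be-The-Leader inequality, and bound each term $q_t(\hat x_t)-q_t(\hat x_{t+1})$ using the strong-convexity-plus-first-order-optimality stability estimate together with the $(L+HD_X)$-Lipschitzness of $q_t$. All three stages are correct as written, including your care with the constrained-minimizer variational inequality. The only cosmetic point is that $\sum_{t=1}^T \frac{1}{2t-1}\le 1+\ln T$ rather than exactly $\ln T$, so strictly you obtain $\frac{2(L+HD_X)^2(1+\ln T)}{H}$; this matches the stated bound up to the usual absorption of additive constants into the logarithm and does not affect any downstream use in the paper.
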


We are now ready to prove Theorem \ref{thm:DP_bandit}.  We restate the theorem here for convenience.

\banditregret*

\begin{proof}[Proof of Theorem \ref{thm:DP_bandit}]
\begin{align*}
&\mathbb{E}\left[\sum_{t=1}^T f_t(S_t)\right] - \min_{S\subseteq [n]} \sum_{t=1}^T f_t(S)\\
&\leq \mathbb{E}\left[\sum_{t=1}^T f_t(S_t)\right] - \min_{x\in \mathcal{K}}\sum_{t=1}^T \hat{f}_t(x) \\
&\leq \mathbb{E}\left[\sum_{t=1}^T \hat{f}_t(x_t)\right] -  \min_{x\in \mathcal{K}}\sum_{t=1}^T \hat{f}_t(x) + 2\gamma M T \quad \text{(by Lemma \ref{continous_to_set})}\\
&\leq \mathbb{E}\left[\sum_{t=1}^T \hat{f}^H_t(x_t)\right] -  \min_{x\in \mathcal{K}}\sum_{t=1}^T \hat{f}^H_t(x) + 2\gamma M T + \frac{THD^2_\mathcal{K}}{2} \quad \text{(by Lemma \ref{lovasz_to_regularized_lovasz})}\\
&\leq \mathbb{E}\left[\sum_{t=1}^T h_t(x_t)\right] -  \min_{x\in \mathcal{K}}\sum_{t=1}^T h_t(x) + 2\gamma M T + \frac{THD^2_\mathcal{K}}{2} \quad \text{(by Lemma \ref{regularized_lovasz_to_h})}\\
&\leq \mathbb{E}\left[\sum_{t=1}^T h_t(\hat{x}_t) -  \min_{x\in \mathcal{K}}\sum_{t=1}^T h_t(x)\right] + 2\gamma M T + \frac{THD^2_\mathcal{K}}{2} + \frac{8n(4M + 2HD_{\mathcal{K}})^2 \ln^{2.5}T}{H\epsilon} \quad \text{(by Lemma \ref{h_to_tilde_h}) }\\
&\leq \mathbb{E}\left[\sum_{t=1}^T \tilde{h}_t(\hat{x}_t) -  \min_{x\in \mathcal{K}}\sum_{t=1}^T \tilde{h}_t(x)\right] + 2\gamma M T + \frac{THD^2_\mathcal{K}}{2} + \frac{8n(4M + 2HD_{\mathcal{K}})^2 \ln^{2.5}T}{H\epsilon} \\ & \quad \quad \quad \quad + \frac{8nMD_{\mathcal{K}} \sqrt{T}}{\sqrt{\gamma}} \quad \text{(by Lemma \ref{lemma_h_to_hat_h}) }\\
%
&\leq \frac{2(\frac{2M(n+1)}{\gamma} + 2HD_{\mathcal{K}})^2 \ln T}{H} + 2\gamma M T + \frac{THD^2_\mathcal{K}}{2} +\frac{8n(4M + 2HD_{\mathcal{K}})^2 \ln^{2.5}T}{H\epsilon} \\ & \quad \quad + \frac{8nMD_{\mathcal{K}} \sqrt{T}}{\sqrt{\gamma}} \quad \text{(by Lemma \ref{lem.htilderegret}) }\\
&= \tilde{O}\left(\frac{n^{3/2}M T^{3/4}}{\epsilon}\right)
\end{align*}
where the last line comes from our choice of $\gamma = \frac{n}{T^{1/4}}$ and $H = \frac{M}{\sqrt{n}T^{1/4}}$ as in the theorem statement. 
\end{proof}


\bibliography{mybib}
\bibliographystyle{abbrvnat}

\appendix


\section{Algorithms from Preliminaries}\label{app.algo}

\begin{algorithm}[h!]
	\caption{Tree Based Aggregation Protocol: \tree$(\{z_i\}_{i=1}^T, \mu, \eps)$} 
	\begin{algorithmic}
	\State \textbf{Input:} Online sequence of vectors $z_1,..., z_T \in \mathbb{R}^d$, $\mu: L_2-$norm bound on each $z_i$, privacy parameter $\epsilon$.
	\State \textbf{Output:} Sequence of noisy partial sums $v_1, \ldots, v_n \in \mathbb{R}$
	\end{algorithmic}
	\begin{algorithmic}
		\State Initialize a binary tree $A$ of size $2^{\lceil \log_2 T\rceil + 1}-1$ with leaves $z_1,...,z_T$ 
		\For {t = 1, \ldots, T}
		\State Accept $z_t$ from the data stream.
		\State Let $P = \{z_t \rightarrow \cdot \cdot \cdot \rightarrow root\}$ be the path from $z_t$ to the root. 
		\Procedure {Tree update}{}
		\State Let $\Lambda$ be the first node in $P$ that is left-child in $A$.  Let $P_\Lambda = \{z_t \rightarrow \cdot \cdot \cdot \rightarrow \Lambda\}$.
		\For {\textbf{all} nodes $\alpha$ in path $P$}
			\State $\alpha \leftarrow \alpha+ z_t$
			\If{$\alpha \in P_\Lambda$}{ $\alpha \leftarrow \alpha + \gamma$ where $\gamma \in \mathbb{R}^d$ is sampled by $\Pr[\gamma = \hat{\gamma}]\propto e^{-\frac{\|\hat{\gamma}\|_2 \epsilon}{\mu(\lceil \log_2 T\rceil + 1)}}$} 
			\EndIf 
		\EndFor
		\EndProcedure
		\Procedure {Output private partial sum}{}
		\State Initialize vector $v_t\in \mathbb{R}^d$ to zero. Let $b$ be a $(\lceil \log_2T\rceil + 1)$-bit binary representation of $t$.
		\For {$i = 1, \ldots, [\log_2T + 1] $}
			\If{ bit $b_i=1$}
			 	\If {$i$-th node in $P$ (denoted $P(i)$) is the left child in A,}{ $v\leftarrow v + P(i)$}
				\Else{ $v_t \leftarrow v_t + $left sibling $P(i)$} 
				\EndIf
			\EndIf
		\EndFor
		\State \textbf{return} noisy partial sum $v_t$
		\EndProcedure
		\EndFor
	\end{algorithmic}
\end{algorithm}

\begin{algorithm}[h!]
	\caption{Private Follow The Approximate Leader: \follow($\{f_i\}_{i=1}^T, H, L, X, \eps$)}
	\begin{algorithmic} \label{algo.ftal}
		\State \textbf{Input:} Online sequence of strongly convex cost functions $\{f_1,...,f_T\}$, strong convexity parameter $H$, Lipschitz parameter $L$, convex and compact set $X \subset \mathbb{R}^n$, privacy parameter $\epsilon$.
		\State \textbf{Output:} Sequence of vectors $x_1, \ldots, x_T \in X$
		\State Initialize $x_1 \leftarrow$ any vector from $X$
		\State Output $x_1$
		\State Compute and pass $\nabla f_1(x_1)$ into \tree$(\{\nabla f_i(x_i)\}, L, \eps)$, and receive current partial sum $v_1$
		\For {t=1, \ldots, T-1}
		\State $x_{t+1} \leftarrow \argmin_{x\in X} \; v_t^\top x + \frac{H}{2} \sum_{j=1}^t \|x-x_j\|_2^2$
		\State Output $x_{t+1}$ and observe $f_{t+1}$
		\State Compute and pass $\nabla f_{t+1}(x_{t+1})$ into \tree$(\{\nabla f_i(x_i)\}, L, \eps)$, and receive current partial sum $v_{t+1}$
		\EndFor		
	\end{algorithmic}
\end{algorithm}

\section{Omitted proofs}

\unbiased*

\begin{proof}
Notice that conditioned on the randomness up to $t-1$
\begin{equation}
\hat{g}_t = \begin{cases} 
-\frac{1}{\rho_0}f_t(B_{0})e(\pi^{-1}(1)) &\  \text{with probability $\rho_0$}\\
\frac{2}{\rho_i}f_t(B_{i})e(\pi^{-1}(i)) &\ \text{with probability $\frac{\rho_i}{2}$ for $1\leq i \leq n-1$}\\
-\frac{2}{\rho_i}f_t(B_{i})e(\pi^{-1}(i+1)) &\ \text{with probability $\frac{\rho_i}{2}$ for $1\leq i \leq n-1$}\\
\frac{1}{\rho_n}f_t(B_n)e({\pi^{-1}(n)}) &\ \text{with probability $\rho_n$}
\end{cases}
\end{equation}
Therefore
\begin{align*}
\mathbb{E}_t[\hat{g}_t] &= \rho_0[ -\frac{1}{\rho_0}f_t(B_0)e(\pi^{-1}(1)) ] + \frac{\rho_1}{2}[\frac{2}{\rho_1}f_t(B_{1})e(\pi^{-1}(1))-\frac{2}{\rho_1}f_t(B_{1})e(\pi^{-1}(2))]\\
&+ ... + \frac{\rho_{n-1}}{2} [\frac{2}{\rho_{n-1}}f_t(B_{
n-1})e(\pi^{-1}(n-1)) -\frac{2}{\rho_{n-1}}f_t(B_{n-1})e(\pi^{-1}(n)) ] + \rho_n [\frac{1}{\rho_n}f_t(B_n) e(\pi^{-1}(n))] \\
& = [f_t(B_{1}) - f_t(B_{0})] e(\pi^{-1}(1)) + [f_t(B_{2}) - f_t(B_{1})] e(\pi^{-1}(2))+ ... + [f_t(B_{n}) - f_t(B_{n-1})] e(\pi^{-1}(n)) 
 \end{align*}
This means that $\mathbb{E}_t[\hat{g}_t](\pi^{-1}(i)) = f(B_i) - f_t(B_{i-1})$ for $i=1,...,n$.
This concludes the proof since $\mathbb{E}_t[\hat{g}_t](i) = \mathbb{E}_t[\hat{g}_t](\pi^{-1}[\pi(i)]) = f_t(B_{\pi(i)}) - f_t(B_{\pi(i)-1}) = g_t(i)$ for $i=1,...,n$. 
\end{proof}

\bound*

\begin{proof}
\begin{align*}
\mathbb{E}_t [||\hat{g}_t||^2] &= \rho_0 [-\frac{1}{\rho_0}f_t(B_0)]^2 + \sum_{i=1}^{n-1} \frac{\rho_i}{2} [(\frac{2}{\rho_i}f_t(B_i))^2 + (-\frac{2}{\rho_i}) f_t(B_i)^2] + \rho_n [\frac{1}{\rho_n}f_t(B_n)^2]\\
&\leq 4 M^2 \sum_{i=0}^n \frac{1}{\rho_i}\\
&= 4 M^2 \sum_{i=0}^n \frac{1}{(1-\gamma)\mu_i + \gamma/(n+1)}\\
&= \sum_{i=0}^n \frac{n+1}{(1-\gamma)\mu_i (n+1) + \gamma}\\
&\leq \frac{4M^2(n+1)^2}{\gamma}\\
&\leq \frac{16M^2n^2}{\gamma}
\end{align*}
The second to last inequality holds as long as $\gamma\leq1$ which will be ensured by our choice of parameters of the algorithm.
\end{proof}

\cont*

The proof is identical to that of \citet{hazan2012submodular}.  We present it here for completeness.
\begin{proof}
We know $\mathbb{E}_t[f_t(S_t)] = \sum_{i=0}^n \rho_i f_t(B_i)$ and $\hat{f}_t(x_t) = \sum_{i=0}^n \mu_i f(B_i)$. Therefore,
\begin{align*}
\mathbb{E}_t [f_t (S_t)] - \hat{f}_t(x_t) &= \sum_{i=0}^n (\rho_i - \mu_i) f_t(B_i)\\
& \leq \gamma \sum_{i=0}^n \left[\frac{1}{n+1}+\mu_i \right] \left|f_t (B_i)\right| \\
& = \gamma \left( \frac{n}{n+1} + 1 \right) M \\
& \leq 2 \gamma M.
\end{align*}

Taking expectation with respect to the randomness up to time $t-1$ yields the result.
\end{proof}

\lovasz*

Lemma \ref{lovasz_to_regularized_lovasz} was stated without proof in \cite{ST13}; we provide a proof here for completeness.

\begin{proof}
Define $\bar{x} := \arg \min \sum_{t=1}^T \hat{f}^H_t(x) - \frac{H}{2} || x||^2 $. By the definition of $\hat{f}_t^H(x)$,
\begin{align*}
 \sum_{t=1}^T \hat{f}_t(x_t) - \min_{x\in \mathcal{K}} \sum_{t=1}^T \hat{f}_t (x) &= \sum_{t=1}^T \hat{f}^H_t(x_t) - \frac{H}{2} || x_t||^2 - \min_{x \in \mathcal{K}} \left\{\sum_{t=1}^T \hat{f}^H_t(x) - \frac{H}{2} ||x||^2\right\}\\
& \leq \sum_{t=1}^T \hat{f}^H_t(x_t)  - \min_{x \in \mathcal{K}} \left\{\sum_{t=1}^T \hat{f}^H_t(x) - \frac{H}{2} ||x||^2\right\}\\
& = \sum_{t=1}^T \hat{f}^H_t(x_t)  - \sum_{t=1}^T \hat{f}^H_t(\bar{x}) + \frac{TH}{2} ||\bar{x}||^2\\
&\leq \sum_{t=1}^T \hat{f}^H_t(x_t)  - \min_{x \in \mathcal{K} }\sum_{t=1}^T \hat{f}^H_t(x) + \frac{TH}{2} D_{\mathcal{K}}^2.
\end{align*}
\end{proof}

\regular*

\begin{proof}
By definition of $h_t$ we have $h_t(x_t) = \hat{f}^H_t (x_t)$. Since $h_t(x)$ is a lower bound on $\hat{f}^H_t(x)$ it follows that $\min_{x\in \mathcal{K}} \sum_{t=1}^T h_t(x) \leq \min_{x\in \mathcal{K}} \sum_{t=1}^T \hat{f}^H_t(x)$ which yields the result. 
\end{proof}

\tildeh*

We provide a proof of Lemma \ref{h_to_tilde_h} in our own notation for completeness.

\begin{proof}
Let $J_t(x) = (\sum_{\tau=1}^t\hat{g}_t)^\top x + \frac{H}{2} \sum_{\tau=1}^t \|x-x_\tau\|^2$ and let $\zeta_t \in \mathbb{R}^d$ denote the random noise added by \tree~at time $t$.  That is, $\zeta_t = g_t - \hat{g}_t$.  Then we can write $x_{t+1} = \arg \min_{x \in \mathcal{K}} J_t(x)$ and $x_{t+1} = \min_{x\in \mathcal{K}} J_t(x) + \zeta_t^\top x$.


Since $J_t(x)$ is $Ht$ strongly convex we have 
\begin{align*}
\| x_{t+1} - x_{t+1}\| \leq \frac{2 \|\zeta_t\|}{Ht}.
\end{align*}
Since $h_t$ is $(4M+2HD_{\mathcal{K}})$-Lipschitz the expression above implies,
\begin{align*}
\left|h_t(x_t) - h_t(\hat{x}_t)\right| \leq \frac{2 (4M + 2HD_{\mathcal{K}})\|\zeta_{t-1}\|}{Ht}.
\end{align*}
It follows that, 
\begin{align*}
\sum_{t=1}^T h_t(x_t) - \min_{x\in \mathcal{K}} \sum_{t=1}^T h(x) &\leq \sum_{t=1}^T h_t(\hat{x}_t) - \min_{x\in \mathcal{K}} \sum_{t=1}^T h(x) + \frac{2(4M+2HD_{\mathcal{K}})}{H} \sum_{t=1}^T \frac{\|\zeta_{t-1}\|}{t}.
\end{align*}

Each $\|\zeta_t\|$ is formed in \tree~by adding at most $\lceil \ln (T)\rceil + 1$ vectors with norms drawn from a Gamma distribution with scale $n$ and shape $\frac{(\lceil \ln (T)\rceil + 1)(L +HD_{\mathcal{K}})}{\epsilon}$, we can bound $\mathbb{E}[\|\zeta_t\|] \leq \frac{4 n \ln^{1.5}T (L + HD_{\mathcal{K}})}{\epsilon}$ for all $t$.  Combining this with the fact that $\sum_{t=1}^T \frac{1}{t} \leq \ln(T)$, we can write,

\begin{align*}
&\mathbb{E}_{\{\zeta_t\}_{t=1}^T}\left[\sum_{t=1}^T h_t(x_t) - \min_{x\in \mathcal{K}} \sum_{t=1}^T h(x) \mid \{\xi_t, S_t\}_{t=1}^T \right]\\
&\leq \mathbb{E}_{\{\zeta_t\}_{t=1}^T}\left[\sum_{t=1}^T h_t(\hat{x}_t) - \min_{x\in \mathcal{K}} \sum_{t=1}^T h(x) \mid \{\xi_t, S_t\}_{t=1}^T \right] + \frac{8n(4M + 2HD_{\mathcal{K}})^2 \ln^{2.5}T}{H\epsilon}. 
\end{align*}

Now taking expectation over all the algorithm's randomness,
\begin{align*}
&\mathbb{E}_{\{\zeta_t\}_{t=1}^T, \{\xi_t,S_t\}_{t=1}^T} \left[ \sum_{t=1}^T h(x_t) - \min_{x\in \mathcal{K}} \sum_{t=1}^T h(x) \right] \\ 
& \leq \mathbb{E}_{\{\xi_t, S_t\}_{t=1}^T, \{\zeta_t\}_{t=1}^T} \left[ \sum_{t=1}^T h_t(\hat{x}_t) - \min_{x \in \mathcal{K}}\sum_{t=1}^T h_t(x)\right] + \frac{8n(4M + 2HD)^2 \ln^{2.5}T}{H\epsilon}. 
\end{align*}
\end{proof}

\hath*

\begin{proof}

Let $x^* = \argmin_{x\in \mathcal{K}}\sum_{t=1}^T h_t(x)$. Then,
\begin{align}\label{eq.hbound}
\mathbb{E}\left[\sum_{t=1}^T h_t(\hat{x}_t) - \min_{x\in \mathcal{K}}\sum_{t=1}^T h_t(x)\right]  &= \mathbb{E}\left[\sum_{t=1}^T h_t(\hat{x}_t)\right] -\mathbb{E}\left[ \min_{x\in \mathcal{K}}\sum_{t=1}^T h_t(x)\right] \\
& = \mathbb{E}\left[\sum_{t=1}^T \tilde{h}_t(\hat{x}_t)\right] -\mathbb{E}\left[ \sum_{t=1}^T h_t(x^*)\right].
\end{align}
The second line follows since $\mathbb{E}[\sum_{t=1}^T \tilde{h}_t(\hat{x}_t)] = \mathbb{E}[\sum_{t=1}^{T-1} \tilde{h}_t(\hat{x}_t)] + \mathbb{E}[\tilde{h}_T(\hat{x}_T)]$ and $\mathbb{E}[\tilde{h}_T(\hat{x}_T)] = \mathbb{E}_{\{(\xi_t,S_t)\}_{t=1}^{T-1}}[ \mathbb{E}_{(\xi_T,S_T)}[\tilde{h}_T(\hat{x}_T)|\{(\xi_t,S_t)\}_{t=1}^{T-1} ]] = \mathbb{E}_{\{(\xi_t,S_t)\}_{t=1}^{T-1}}[  h_T(\hat{x}_T) ] = \mathbb{E}[h_T(\hat{x}_T)]$.  Backwards induction on $T$ yields the desired equality. 



We now bound the absolute difference between $\sum_{t=1}^T h_t(x)$ and $\sum_{t=1}^T \hat{h}_t(x)$ for all $x \in \mathcal{K}$. Since $\tilde{h}_t(x)$ is a random variable, so we seek to bound this absolute difference with probability 1. This will ensure that our bound holds against adaptive adversaries. By the definitions of $h(x)$ and $\hat{h}(x)$ we have,
\begin{align*}
\left|\sum_{t=1}^T \left[ h_t(x) - \tilde{h}_t(x) \right]\right| &= \left|\left(\sum_{t=1}^T \left[\nabla \hat{f}^H_t(x_t) - (\hat{g}_t + Hx_t)\right]\right)^\top x \right| \\
&= \left| \left(\sum_{t=1}^T [\nabla \hat{f}_t (x_t) - \hat{g}_t]\right)^\top x \right|.
\end{align*}

Define $\alpha_t := \nabla \hat{f}_t(x_t) - \hat{g}_t$.  Then we can write, 
\begin{equation}\label{eq.hhtilde}
\left|\sum_{t=1}^T h_t(x) - \tilde{h}_t(x)\right| \leq \|x\|_2 \|\sum_{t=1}^T\alpha_t\|_2.
\end{equation}

We next proceed to bound $\mathbb{E}\left[\|\sum_{t=1}^T \alpha_t\|_2\right]^2$. By Lemma \ref{lemma.alphas} stated below, $\mathbb{E}[\alpha_t^\top \alpha_{t'}] = 0$ for $t\neq t'$.   
\begin{align*}
\mathbb{E}\left[\left\|\sum_{t=1}^T \alpha_t\right\|_2\right]^2 &\leq \mathbb{E}\left[\left\|\sum_{t=1}^T \alpha_t\right\|^2_2\right]\quad \text{ (Jensen's inequality)}\\
&= \sum_{t=1}^T \mathbb{E}\left[\|\alpha_t\|^2_2\right] + 2 \sum_{t<t'}\mathbb{E}\left[\alpha_t^\top \alpha_{t'}\right]\\
&= \sum_{t=1}^T \mathbb{E}\left[\|\nabla \hat{f}_t(x_t)-\hat{g}_t\|^2_2\right]\\
&\leq \sum_{t=1}^T \mathbb{E}\left[2\|\nabla \hat{f}_t(x_t)\|^2_2 + 2 \|\hat{g}_t\|^2_2\right]\\
&\leq 4T \cdot \frac{16M^2n^2}{\gamma}
\end{align*}
where the last line follows 
from Lemma \ref{lem.bound}, and the fact that if $\|\hat{g}_t\|_2\leq G$ then $\|\nabla \hat{f}_t(x_t)\|_2\leq G$ by Jensen's inequality. 

Plugging this bound into Equation \eqref{eq.hhtilde} gives,
\begin{align*}
\left| \sum_{t=1}^T \left[ h_t(x)- \tilde{h}_t(x)\right]\right| \leq D_{\mathcal{K}} \sqrt{\frac{64 M^2 n^2 T}{\gamma}},
\end{align*}
which implies,
\begin{align*}
\mathbb{E}\left[\sum_{t=1}^T h_t(x^*)\right] &\geq \mathbb{E}\left[\sum_{t=1}^T \tilde{h}(x^*)\right] - \frac{8 D_{\mathcal{K}}Mn\sqrt{T} }{\sqrt{\gamma}} \\
& \geq \mathbb{E}\left[ \min_{x\in \mathcal{K}} \sum_{t=1}^T \tilde{h}(x) \right] - \frac{8D_{\mathcal{K}}Mn\sqrt{T} }{\sqrt{\gamma}}.
\end{align*}
Combining this with Equation \eqref{eq.hbound} completes the proof.
\end{proof}

The following lemma was asserted without proof in \cite{ST13}.  We prove it here for completeness.

\begin{lemma}\label{lemma.alphas}
Let $\alpha_t = \nabla \hat{f}_t(x_t) - \hat{g}_t$. Then, for $t<t'$ it holds that $\mathbb{E}[\alpha_t^\top \alpha_{t'}] = 0$, where the expectation is taken over the randomization of the algorithm used to build the estimates of the gradient $\{\hat{g}_t\}_{t=1}^T$.
\end{lemma}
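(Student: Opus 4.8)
The plan is to show that the two error terms $\alpha_t = \nabla \hat{f}_t(x_t) - \hat{g}_t$ and $\alpha_{t'} = \nabla \hat{f}_{t'}(x_{t'}) - \hat{g}_{t'}$ are uncorrelated by conditioning on all the randomness up through time $t'-1$ and using the unbiasedness of $\hat{g}_{t'}$ (Lemma~\ref{lem.unbiased}). The key observation is that for $t < t'$, the vector $\alpha_t$ is fully determined by the algorithm's randomness (the choices $\{(\xi_s, S_s)\}_{s \le t}$ and the \tree~noise) up to time $t \le t'-1$: indeed $\hat{g}_t$ depends only on $S_t$, the permutation $\pi$ associated with $x_t$, and $\xi_t$, while $x_t$ is a deterministic function of $\hat{v}_{t-1}$ which in turn depends only on $\{\hat{g}_s, \zeta_s\}_{s \le t-1}$; and $\nabla \hat{f}_t(x_t)$ is a deterministic function of $x_t$.

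Concretely, I would write, using the tower property,
\begin{align*}
\mathbb{E}\left[\alpha_t^\top \alpha_{t'}\right] = \mathbb{E}\left[\, \mathbb{E}\left[\alpha_t^\top \alpha_{t'} \,\big|\, \mathcal{H}_{t'-1}\right]\right] = \mathbb{E}\left[ \alpha_t^\top\, \mathbb{E}\left[\alpha_{t'} \,\big|\, \mathcal{H}_{t'-1}\right]\right],
\end{align*}
where $\mathcal{H}_{t'-1}$ denotes the $\sigma$-algebra generated by all of the algorithm's internal randomness up to and including time $t'-1$, and the second equality uses that $\alpha_t$ is $\mathcal{H}_{t'-1}$-measurable (since $t \le t'-1$). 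It then remains to argue that $\mathbb{E}[\alpha_{t'} \mid \mathcal{H}_{t'-1}] = 0$. Given $\mathcal{H}_{t'-1}$, the iterate $x_{t'}$ is determined, hence so is the maximal chain $B_0 \subset \cdots \subset B_n$, the permutation $\pi$, and the weights $\{\mu_i\}$; the only fresh randomness in $\hat{g}_{t'}$ is the draw of $S_{t'}$ from the distribution $\{\rho_i\}$ and the sign $\xi_{t'}$. By exactly the computation in the proof of Lemma~\ref{lem.unbiased}, $\mathbb{E}[\hat{g}_{t'} \mid x_{t'}] = \nabla \hat{f}_{t'}(x_{t'})$, and since $\nabla \hat{f}_{t'}(x_{t'})$ is $\mathcal{H}_{t'-1}$-measurable we conclude $\mathbb{E}[\alpha_{t'} \mid \mathcal{H}_{t'-1}] = \nabla \hat{f}_{t'}(x_{t'}) - \nabla \hat{f}_{t'}(x_{t'}) = 0$. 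Plugging this in gives $\mathbb{E}[\alpha_t^\top \alpha_{t'}] = \mathbb{E}[\alpha_t^\top \cdot 0] = 0$.

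The only real subtlety — and the step I would be most careful about — is the measurability bookkeeping: one must make sure that the subgradient $\nabla \hat{f}_{t'}(x_{t'})$ used inside Lemma~\ref{lem.unbiased} is precisely the one associated with the maximal chain that the algorithm actually uses at time $t'$ (the statement of Lemma~\ref{lemma_properties_of_extension} allows several subgradients depending on tie-breaking in the chain), so that $\mathbb{E}[\hat{g}_{t'}\mid\mathcal{H}_{t'-1}]$ really equals that specific vector rather than merely \emph{some} subgradient. Since \bandit~fixes a particular chain and permutation $\pi$ deterministically from $x_{t'}$, this is fine, but it should be stated explicitly. The \tree~noise $\zeta_{t'}$ plays no role here because $\hat{g}_{t'}$ is formed from $f_{t'}(S_{t'})$ before the noisy partial sum is computed; and the case of an adaptive adversary is handled automatically, since $f_{t'}$ is $\mathcal{H}_{t'-1}$-measurable and the conditional computation above goes through verbatim.
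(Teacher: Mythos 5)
Your proof is correct and rests on the same two ingredients as the paper's: the tower property of conditional expectation and the conditional unbiasedness $\mathbb{E}[\hat{g}_{t'}\mid x_{t'}]=\nabla\hat{f}_{t'}(x_{t'})$ from Lemma~\ref{lem.unbiased}. Your organization—conditioning on $\mathcal{H}_{t'-1}$ once and killing $\alpha_{t'}$ directly—is in fact a bit cleaner than the paper's expansion of the product into four cross-terms (which informally treats $\nabla\hat{f}_t(x_t)$ as deterministic), and your remark about fixing the particular subgradient/chain used by the algorithm is a worthwhile clarification.
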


\begin{proof}
\begin{align*}
\mathbb{E}[\alpha_t^\top \alpha_{t'}] &= \mathbb{E}[(\nabla \hat{f}_t(x_t) - \hat{g}_t)^\top (\nabla \hat{f}_{t'}(x_{t'}) - \hat{g}_{t'})]\\
&= \mathbb{E}[\nabla \hat{f}_t(x_t)^\top \nabla \hat{f}_{t'}(x_{t'})] - \mathbb{E}[\nabla \hat{f}_t(x_t)^\top \hat{g}_{t'}] - \mathbb{E}[\nabla \hat{f}_{t'}(x_{t'})^\top \hat{g}_t] + \mathbb{E}[\hat{g}_t^\top \hat{g}_{t'}] \\
& = \nabla \hat{f}_t(x_t)^\top \nabla \hat{f}_{t'}(x_{t'}) - \nabla \hat{f}_t(x_t)^\top \nabla \hat{f}_{t'}(x_{t'}) - \nabla \hat{f}_{t'}(x_{t'})^\top \nabla \hat{f}_t(x_t) + \mathbb{E}[\hat{g}_t^\top \hat{g}_{t'}]
\end{align*}
We now show that $\mathbb{E}[\hat{g}_t^\top \hat{g}_{t'}] = \nabla \hat{f}_{t'}(x_{t'})^\top \nabla \hat{f}_t(x_t)$.
\begin{align*}
\mathbb{E}[\hat{g}_t^\top \hat{g}_{t'}] &= \mathbb{E}_{1,...t'-1}[ \mathbb{E}_{t'}[\hat{g}_t^\top \hat{g}_{t'}|t=1,...t'-1]] \\
& = \mathbb{E}_{1,...t'-1}[ \hat{g}_t^\top \mathbb{E}_{t'}[ \hat{g}_{t'}|t=1,...t'-1]]\\
& = \mathbb{E}_{1,...t'-1}[ \hat{g}_t^\top \nabla\hat{f}_{t'}(x_{t'})]\\
& = \nabla\hat{f}_{t}^\top(x_t) \nabla\hat{f}_{t'}(x_{t'})
\end{align*} 
\end{proof}

\end{document}